\documentclass{article}
\usepackage[a4paper, total={6in, 8in}]{geometry}
\usepackage{graphicx,amssymb,amsmath}
\usepackage{amsthm}
\usepackage{lscape}

\newcommand{\Z}{{\mathbb Z}}
\newcommand{\N}{{\mathbb N}}
\newcommand{\R}{{\mathbb R}}
\newcommand{\W}{{\mathcal W}}

\newcommand{\E}{{\mathcal E}}
\renewcommand{\H}{{\mathcal H}}

\newtheorem{definition}{Definition}


\newtheorem{theorem}{Theorem}
\newtheorem{cor}{Corollary}
\newtheorem{obs}{Observation}
\newtheorem{lemma}{Lemma}

\title{Helly Numbers of Polyominoes\thanks{This paper was published in Graphs and Combinatorics, September 2013, Volume 29, Issue 5, pp 1221-1234~\cite{cikl-hnp-11j}}}

\author{Jean Cardinal\thanks{Computer Science Department, 
Universit\'e Libre de Bruxelles (ULB), Belgium, 
{\tt \{jcardin, mkormanc, stefan.langerman\}@ulb.ac.be}}
\and 
Hiro Ito\thanks{
School of Informatics, Kyoto University, Japan, 
{\tt itohiro@kuis.kyoto-u.ac.jp}}\and 
Matias Korman\footnotemark[2] \and 
Stefan Langerman\footnotemark[2] $^,$\thanks{Directeur de Recherches du
F.R.S.-FNRS}}


\begin{document}
\thispagestyle{empty}
\maketitle

\begin{abstract}
We define the Helly number of a polyomino $P$ as the smallest number $h$ such that the $h$-Helly property holds for the family of symmetric and translated copies of $P$ on the integer grid. We prove the following: (i) the only polyominoes with Helly number 2 are the rectangles, (ii) there does not exist any polyomino with Helly number 3, (iii) there exist polyominoes of Helly number $k$ for any $k\neq 1,3$. 
\end{abstract}

\section{Introduction}

Helly's theorem on convex sets is a cornerstone of discrete geometry, with countless corollaries and extensions in both geometry and combinatorics. For instance, Helly-type properties of convex lattice subsets and hypergraphs have been studied since the 70's~\cite{d-ccl-73}. On the other hand, the theory of polyominoes, connected subsets of the square lattice $\mathbb{Z}^2$, has been developed since the 50's with the seminal works of Solomon Golomb~\cite{Golomb} and the famous recreational mathematician Martin Gardner.

In this paper, we propose a natural definition of the Helly number of a polyomino $P$ by considering families of symmetric and translated copies of $P$. We show that the only polyominoes with Helly number 2 are rectangles. We prove the surprising fact that there does not exist any polyomino with Helly number 3. Finally, we exhibit polyominoes of Helly number $k$ for any $k \geq 4$. Since there cannot be polyominoes of Helly number $1$, this completely characterizes the values of $k$ for which there exist polyominoes with Helly number $k$.

\subsection*{Definitions}

We define a planar graph $G=(\Z^2, E)$ that represents the adjacency relation between grid points. Each vertex $(i,j)$ is connected to its four neighbors $(i, j-1)$, $(i-1, j)$, $(i+1,j)$, and $(i, j+1)$. A subset of $\Z^2$ is \emph{connected} if its induced subgraph in $G$ is connected.

\begin{definition}
A \emph{polyomino} is a connected finite subset of $\Z^2$.
\end{definition}
\begin{figure}
\center
\includegraphics[width=0.7 \textwidth]{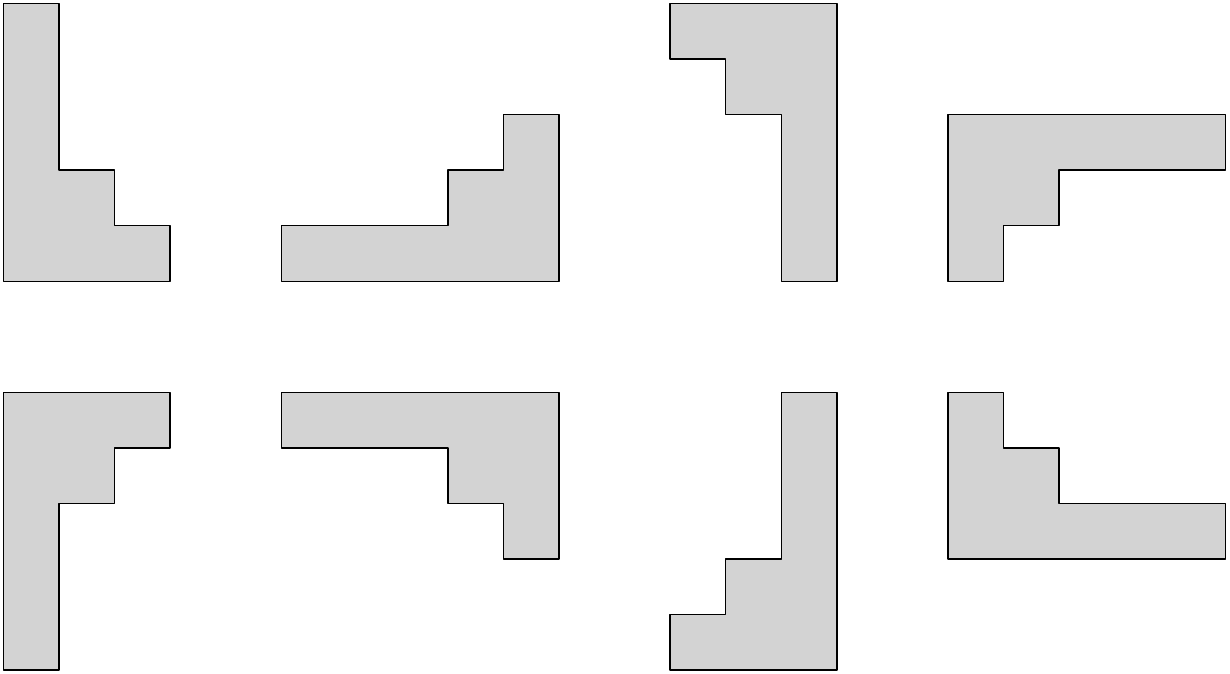}
\caption{Eight possible symmetries of a polyomino.}
\label{fig_sym}
\end{figure}

We often identify the point $(i,j) \in \Z^2$ with the unit square $[i,i+1]\times[j,j+1]\subset \R^2$. With this transformation a polyomino becomes an orthogonal polygon whose edges are on the unit grid. A {\em copy} of a polyomino $P$ is the image of $P$ by the composition of an integer translation with one of the eight symmetries of the square (that is, a mirror image and/or a $90$, $180$, or $270$-degree rotation of $P$). Figure~\ref{fig_sym} shows an example of a polyomino and its eight symmetries.
The cardinality of a polyomino will be denoted by $|P|$ (and will be referred as the {\em size} of $P$).

\begin{definition}
For any $k\in \mathbb{N}$, we say that polyomino $P$ satisfies the $k$-Helly\cite{gr-hdcg-04} property if, for any finite family $\mathcal{A}$ of copies of $P$ in which $A_1 \cap \ldots \cap A_k \neq \emptyset$  for any $A_1,\ldots, A_k\in \mathcal{A}$, we have $\bigcap_{A \in \mathcal{A}} A\neq \emptyset$. The {\em Helly number} $\H (P)$ of a polyomino $P$  is the smallest $k\in\N$ such that $P$ is $k$-Helly. 
\end{definition}


By definition, any polyomino $P$ that satisfies the  $k$-Helly property will also satisfy $k'$-Helly (for any $k'\geq k$). 

\subsection*{Previous work}

A {\em convex lattice set} in $\Z^d$ is the intersection of a convex set in $\R^d$ with the integer grid $\Z^d$. In 1973, Doignon proved that any family of convex lattice sets in $\Z^d$ is $2^d$-Helly~\cite{d-ccl-73}. A matching lower bound is obtained by considering all subsets of size $2^d-1$ of $\{0,1\}^d$. In our context, this implies that any convex polyomino (i.e. a polyomino that is the intersection a convex set in $\R^2$ with $\Z^2$) is $4$-Helly. Note that this is different from the term {\em convex polyomino}, which usually refers to polyominoes that are simultaneously row and column convex. 

Fractional Helly numbers of convex lattice subsets are studied by B\'ar\'any and Matousek~\cite{Barany2003}. Recently, Golumbic, Lipshteyn, and Stern showed that 1-bend paths on a grid have Helly number 4~\cite{GolumbicLS09}\footnote{In fact the paper claims that 1-bend paths have Helly number 3. However, through personal communication, we heard that there is an error in the paper. This error will be corrected in an upcoming publication by the authors.}. We note the environment considered is slightly different, since they considered that two paths have nonempty intersection whenever they share an edge. 

Recently, Cardinal {\em et. al}~\cite{ccisklt-cagnvttt-11b} consider a variation of the well-known Tic-Tac-Toe game in which the first player occupies a cell of the grid, and the second player locates a copy of a given polyomino $P$. The objective of the first player is to construct a copy of $P$ while the second player must prevent him from doing so. Among other results, the authors of \cite{ccisklt-cagnvttt-11b} give a general winning strategy for the first player, provided that $P$ is a polyomino that satisfies the $2$-Helly property. Unfortunately, in this paper we show that their strategy cannot be used with many polyominoes, since only rectangles are $2$-Helly.
 
\section{Helly Number up to 4}
In this Section we study polyominoes of small Helly number. Since we are considering finite polyominoes, it is easy to see that no polyomino can have Helly number $1$. Thus, we first look for polyominoes with Helly number two.

\begin{definition}
A \emph{rectangle} in $\Z^2$ is the cartesian product of two intervals in $\Z$. 
\end{definition}

It is easy to see that rectangles have Helly number 2. We show that the converse also holds.

\begin{theorem}\label{theo:H2}
A polyomino has Helly number 2 if and only if it is a rectangle.
\end{theorem}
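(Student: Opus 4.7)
The forward direction is an easy reduction to one dimension. Every copy of a rectangle---the eight symmetries of the square merely permute width and height---is again an axis-aligned integer rectangle $I_x\times I_y$. Two such copies intersect iff they intersect in each coordinate, and integer intervals are trivially $2$-Helly in $\Z$. Taking Cartesian products, pairwise intersection coordinate-wise implies a common cell, so rectangles are $2$-Helly.

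For the converse, let $P$ be a non-rectangular polyomino; I want to exhibit a finite family of translates of $P$ that pairwise intersect but have empty common intersection. The starting observation is that a non-rectangular $P$ always contains a \emph{reflex corner}: a $2\times2$ sub-grid with exactly three cells in $P$. This is because the boundary of $P$, viewed as an orthogonal polygon (with possibly several inner boundaries coming from holes), must have a concave vertex whenever $P$ is not a rectangle---otherwise it would have exactly four convex corners. After applying a suitable symmetry and translation, I may assume $\{(0,0),(1,0),(0,1)\}\subset P$ and $(1,1)\notin P$.

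My first candidate family is the three translates $P_1=P$, $P_2=P+(1,0)$, $P_3=P+(0,1)$. Pairwise intersections are witnessed by $(1,0)\in P_1\cap P_2$, $(0,1)\in P_1\cap P_3$ and $(1,1)\in P_2\cap P_3$, using the reflex-corner hypothesis. Their triple intersection is precisely the set of $(x,y)\in P$ whose western neighbor $(x-1,y)$ and southern neighbor $(x,y-1)$ also lie in $P$. For many polyominoes (say, any L-shape or staircase), no such cell exists and we are done.

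When the triple intersection is nonempty---e.g.\ for the plus pentomino, where $(0,0)\in P$ has both $(-1,0)$ and $(0,-1)$ in $P$---I switch to a four-copy construction. Let $(a,b)$ be the smallest vector such that $P$ contains $p, p+(a,0), p+(0,b)$ but not $p+(a,b)$, and consider the translates $P, P+(a,0), P+(0,b), P+(a,b)$. Pairwise nonempty intersection reduces to showing $(a,0),(0,b),(a,b),(a,-b)\in P-P$, while the fourfold intersection consists of cells $(x,y)\in P$ forming the top-right corner of a filled $a\times b$ quadruple $\{(x,y),(x-a,y),(x,y-b),(x-a,y-b)\}\subset P$; the minimality of $(a,b)$ is what rules this out, since such a quadruple would produce a strictly smaller reflex corner. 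The main obstacle I anticipate is making this fallback construction work uniformly: verifying the membership of all four required shift vectors in $P-P$, and ruling out filled $a\times b$ quadruples for every non-rectangular $P$. I expect minimality of $(a,b)$ together with a short case distinction (for instance between simply connected polyominoes and those with holes) to be the technical heart of the argument.
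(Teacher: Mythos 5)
The forward direction is fine and matches the paper (which treats it as immediate). The converse, however, has a genuine gap, and it sits exactly where you anticipated trouble. Your fallback four-translate construction is justified by the claim that a filled quadruple $\{(x,y),(x-a,y),(x,y-b),(x-a,y-b)\}\subset P$ ``would produce a strictly smaller reflex corner,'' but a filled quadruple has no missing cell and therefore contradicts nothing about the minimality of $(a,b)$. A concrete counterexample is the P-pentomino $P=\{(0,0),(1,0),(0,1),(1,1),(0,2)\}$: its only reflex corner gives $(a,b)=(1,1)$ (already minimal), the three-translate family $P$, $P+(1,0)$, $P+(0,1)$ has the common point $(1,1)$, and the four-translate family additionally contains $(1,1)$ in its fourfold intersection because $P$ contains a filled $2\times 2$ square. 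So both of your constructions fail for this polyomino. A separate, smaller issue is that you never verify $(a,b)\in P-P$, which is needed for $P$ and $P+(a,b)$ to meet at all; for $(a,b)\neq(1,1)$ this can genuinely fail. (A translate-only witness for the P-pentomino does exist --- five translates whose offsets form a $180^\circ$ rotation of $-P$ inside the $2\times 3$ box --- but finding it requires a different idea than either of your constructions, and you would still need an argument covering all non-rectangles.)

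The paper avoids these difficulties by not restricting to translates: it proves the stronger statement that a non-rectangle is not even $3$-Helly. It shows (Lemma~\ref{lem:bigsmallqu}) that every non-rectangle has a ``small'' or ``big'' empty quadrant, and then (Lemmas~\ref{lem:beqex} and~\ref{lem:seqex}) builds four (or eight) \emph{reflected and rotated} copies whose empty quadrants cover all of $\Z^2$, so the common intersection is automatically empty while every triple still meets. The reflections are what make the ``empty quadrants cover the plane'' argument work uniformly, with no case analysis on whether $P$ contains filled squares or repeated corners; your translate-only route forgoes this tool and, as written, does not close.
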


In the following we give a slightly stronger result; we will show that the only polyominoes that satisfy the $3$-Helly property are rectangles.

\begin{definition}
A polyomino $P$ has the \emph{small empty quadrant} structure if for some
copy $P'$ of $P$, there exist values 
$x_1,y_1\in\Z$ such that the intersection of $P'$ with the 
$2\times 2$ rectangle $[x_1,x_1+1]\times[y_1-1,y_1]$ has cardinality
$\geq 3$, and $P'$ contains no point in $\{(x,y):x \geq x_1, y > y_1\}$ (see Figure \ref{fig_littlebig} $(a)$).
\end{definition}

\begin{definition}
A polyomino $P$ has the \emph{big empty quadrant} structure if for some
copy $P'$ of $P$, there exist values
$x_1,y_1,x_2,y_2\in\Z$, $y_1<y_2$, $x_1<x_2$ such that $\{(x_1,y_2),(x_1,y_1),(x_2,y_1)\}\subset P'$ and $P'$ contains no point in the upper right quadrant $\{(x,y):x>x_1, y>y_1\}$ (see Figure \ref{fig_littlebig} $(b)$).
\end{definition}

\begin{figure}
\center
\includegraphics[width=0.7\textwidth]{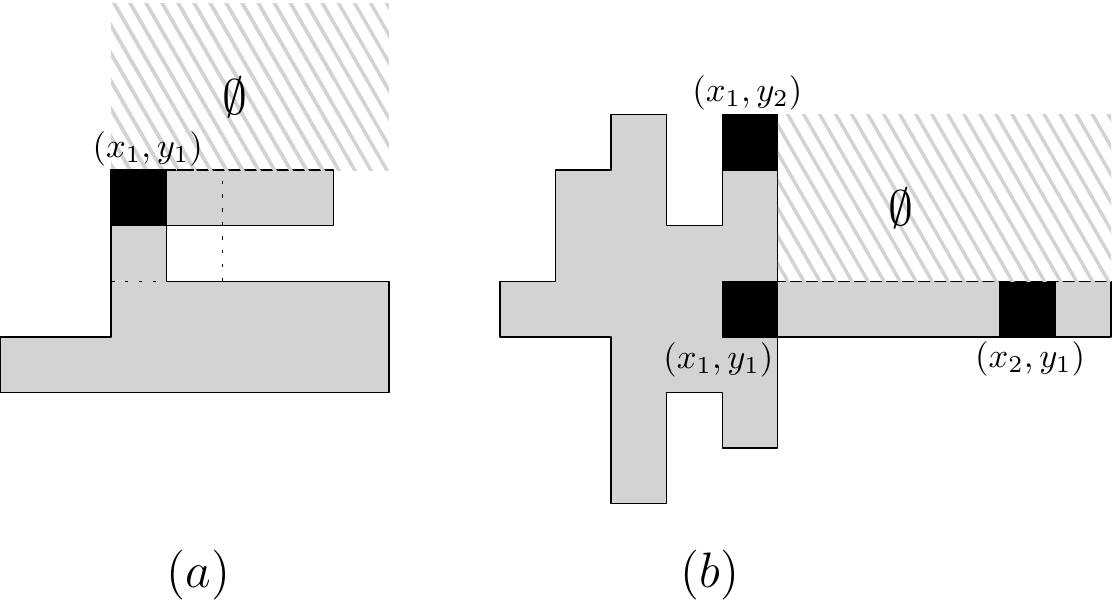}
\caption{Illustration of Lemma \ref{lem:bigsmallqu}. In order for a polyomino $P$ (of height at least 2) to not have the small empty quadrant structure (case $(a)$), $P$ cannot have two consecutive points on its upper boundary. If this  occurs, we can find a big empty quadrant (case $(b)$). The coordinates $x_1,x_2,y_1,y_2$ that generate the big or small empty quadrant are shown in black.}
\label{fig_littlebig}
\end{figure}


Given a rectangle $[x_0,x_1]\times [y_0,y_1]$, its {\em height} is $y_1-y_0+1$. Analogously, its {\em width} is  $x_1-x_0+1$. The height and width of a polyomino $P$ are equal to the height and width of the minimal bounding rectangle of $P$ (i.e.,the smallest rectangle in $\Z^2$ that contains $P$). 

\begin{lemma}\label{lem:bigsmallqu}
Every polyomino $P$ of height and width at least 2 either has the small empty quadrant or the big empty quadrant structure.
\end{lemma}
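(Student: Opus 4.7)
My plan is to orient $P$ (using one of its 8 symmetric copies) so that its topmost row exposes the required structure, then split on whether that row contains two horizontally adjacent cells. Write $y^* = y_{\max}(P)$ and $R = \{x \in \Z : (x, y^*) \in P\}$. I anticipate that the case where $R$ is free of adjacent pairs is the harder one, since that is where the big empty quadrant must emerge.

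If $R$ contains two consecutive integers, fix a maximal run $[s_1, s_2] \subseteq R$ of such integers. A connectivity argument shows that some $s$ in this run must satisfy $(s, y^* - 1) \in P$: otherwise the cells $\{(x, y^*) : s_1 \leq x \leq s_2\}$ are cut off from the rest of $P$, because their sideways neighbours lie outside $R$, nothing sits above $y^*$, and by assumption nothing sits directly below, contradicting connectivity together with the hypothesis that $P$ has height at least $2$. Since the run has length at least two, $s$ has a neighbour in $R$; the $2 \times 2$ box $[s, s+1] \times [y^* - 1, y^*]$ (or its mirror) therefore contains three cells of $P$, and the quadrant $\{x' \geq s, y' > y^*\}$ is empty because $y^*$ is the maximum, yielding the small empty quadrant structure.

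If $R$ consists only of isolated integers, connectivity forces $(a, y^* - 1) \in P$ for every $a \in R$, because this is the only available $P$-neighbour of $(a, y^*)$. Should some adjacent cell $(a \pm 1, y^* - 1)$ also lie in $P$, an asymmetric $2 \times 2$ box at that position produces three $P$-cells (while $(a \pm 1, y^*) \notin P$ by isolation) and once more gives a small empty quadrant. In the remaining sub-case every top cell is the tip of a vertical spike of length at least two. Here I aim for a big empty quadrant: set $a^* = \max R$, $y_1 = \max\{y : (x, y) \in P \text{ for some } x > a^*\}$, and choose any $x_2 > a^*$ attaining this maximum; the strict upper-right quadrant of $(a^*, y_1)$ is empty by construction, and the candidate triple is $(a^*, y^*), (a^*, y_1), (x_2, y_1)$.

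The main obstacle is verifying $(a^*, y_1) \in P$ and handling the degenerate case $a^* = x_{\max}(P)$ in which no column lies strictly to the right of $a^*$. For the former, I trace the spike at column $a^*$ down to where it meets the right half of $P$ and use this path to show that column $a^*$ reaches row $y_1$; if the first pair of northeast-maximal points fails to place an inside corner in $P$, I iterate down the northeast staircase until such a pair is found. For the degenerate case, I use the freedom of choosing among the $8$ copies: a horizontal reflection or a $90^\circ$ rotation brings $a^*$ into the interior, and the residual scenario where all four rotations simultaneously fail would place all four corners of the bounding box in $P$ while leaving each of its four sides composed only of isolated cells, which disconnects the sides from one another and contradicts the connectivity of $P$.
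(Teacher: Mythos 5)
Your skeleton is essentially the paper's: anchor at the top row $y^*$, dispose of the cases where the top row has two adjacent cells or where a cell of row $y^*-1$ sits diagonally next to a top cell (both give a small empty quadrant), and then try to build a big empty quadrant from the rightmost top cell $a^*$ and the material to its right. The first two cases are correct. The gap is exactly where you flagged it, and your proposed repair does not close it. The assertion that ``column $a^*$ reaches row $y_1$'' is false in general, and worse, in that situation the correct conclusion is sometimes a \emph{small} empty quadrant on the \emph{left} side of the column, which your sub-case (``Here I aim for a big empty quadrant'') cannot produce no matter how far you iterate. Concretely, take
$P=\{(0,3),(0,2),(0,1),(-1,1),(-1,0),(-1,-1),(0,-1),(1,-1),(1,0)\}$. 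Here $y^*=3$, $R=\{0\}$ is a single isolated cell, $(0,2)\in P$ while $(\pm1,2)\notin P$, so you land in your final sub-case with $a^*=0$ and $y_1=0$; but $(0,0)\notin P$, so your triple $(0,3),(0,0),(1,0)$ is not contained in $P$, and one checks that no triple anchored at column $0$ gives a big empty quadrant at all (any point $(x_2,y_1')$ with $x_2>0$ and $\{x>0,\ y>y_1'\}$ empty forces $y_1'\ge 0$, and $P$ has no point with $x>0$ and $y\ge 0$ other than $(1,0)$, whose row does not meet column $0$). The lemma still holds for this $P$, but via the $2\times2$ box $[-1,0]\times[1,2]$ (containing $(0,2),(0,1),(-1,1)$) together with the empty quadrant $\{x>0,\ y\ge 1\}$ lying to its right.

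The missing idea is the dichotomy the paper uses to control this case. Let $j$ be the largest integer with $(0,y^*),(0,y^*-1),\dots,(0,y^*-j)$ all in $P$ (the maximal contiguous run below the top cell), and look at the quadrant $Q=\{x>0,\ y\ge y^*-j\}$, i.e.\ the region to the right of that run. If $Q\cap P\neq\emptyset$, its highest point $(x',y')$ satisfies $y^*-j\le y'<y^*$, so $(0,y')$ is automatically on the run and $(0,y^*),(0,y'),(x',y')$ is a genuine big empty quadrant --- this is what rescues your ``$(a^*,y_1)\in P$'' step, but only because $y'$ is measured inside the strip $y\ge y^*-j$ rather than over the whole half-plane $x>a^*$ as in your definition of $y_1$. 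If instead $Q\cap P=\emptyset$, the run is bordered by empty cells above, below, and to the right, so connectivity forces a left neighbour $(-1,y^*-j')$ for some $j'\le j$, which yields a $2\times2$ box with three cells of $P$ whose empty quadrant opens to the right --- a small empty quadrant in a rotated position. Your ``iterate down the northeast staircase'' heuristic never switches to this second branch, so as written the proof does not go through.
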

\begin{proof}
Observe that if $P$ has either height or width exactly $1$ it must be a rectangle. Hence,  this Lemma shows that any polyomino (other than some rectangles), has one of the two structures. An sketch of the proof of the claim is as follows: let $(i,j)$ be a point on the upper boundary of $P$ with at least two neighbors in $P$ (say points $(i-1,j)$ and $(i,j-1)$). Since $(i,j)$ is a boundary point, there will be a quadrant adjacent to it that is empty. In particular, $P$ will have a small empty quadrant structure. Thus, in order for $P$ to not have this structure, there cannot be a point on the upper, lower, right or left boundary of $P$ with two or more neighbors. However, in this situation we will show that $P$ must contain the big empty quadrant structure. 

 Let $(x_0,y_0)$ be the point of $P$ highest $x$-coordinate along the upper boundary of its bounding box. We will first show that if $(x_0,y_0-1)\not\in P$, then there exists $i\in\N$ such that $(x_0-i+1, y_0),(x_0-i,y_0),(x_0-i,y_0-1)\in P$. Proof of this claim is as follows: by definition of $(x_0,y_0)$, we have that $(x_0+1, y_0)\not\in P$, and $(x_0, y_0+1)\not\in P$. If we suppose that $(x_0,y_0-1)\not\in P$, then, in order for $P$ to be connected, we must have $(x_0-1,y_0)\in P$. By applying the same argument iteratively on this new point, we must have that eventually there exists an $i$ such that both $(x_0-i-1,y_0)\in P$ and $(x_0-i-1,y_0-1)\in P$, otherwise $P$ is a rectangle of height 1.

Therefore, if $(x_0,y_0-1)\not\in P$, $P$ has the small empty quadrant structure. Now assume otherwise and let $j$ be the smallest integer such that $(x_0,y_0-j)\in P$ and $(x_0,y_0-j-1)\not\in P$. If the quadrant $\{(x,y):x>x_0, y\geq y_0-j\}$ contains no point of $P$, then, by the same  argument as in the above claim, there must be a point of $P$ immediately left of the column $x_0$ between $y_0$ and $y_0-j$. In other words, there must be an integer $j'\in[0,j-1]$ such that $| P\cap ([x_0-1,x_0]\times[y_0-j'-1,y_0-j']) |\geq 3$,
and again $P$ has the small empty quadrant structure.

Finally, if the quadrant $\{(x,y): x>x_0, y\geq y_0-j\}$ is not empty, let $(x',y')$ be the highest point in that quadrant (pick one arbitrarily if many exist). In that case, the three points $(x_0,y_0), (x_0,y'), (x',y')$ form a big empty quadrant structure. 
\end{proof}

\begin{figure}
	\begin{center}
	\includegraphics[width=0.5\textwidth]{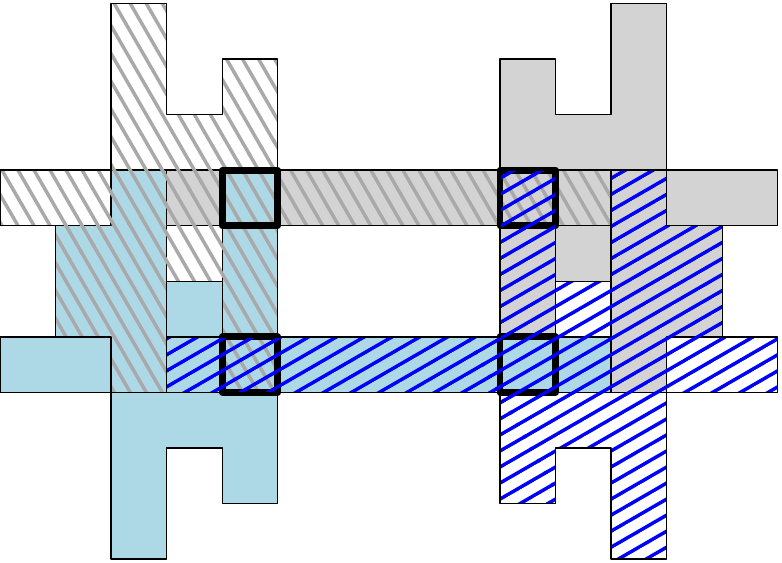}
	\caption{Proof of Lemma \ref{lem:beqex}. By flipping and rotating $P$, we obtain four copies that every three intersect, but there is no point common to the four copies. The highlighted unit squares correspond to points $(x_1,y_1), (x_1,y_2),(x_2,y_1)$ and $(x_2,y_2)$.}
	\label{fig_big}
	\end{center}
\end{figure}
\begin{lemma}\label{lem:beqex}
If a polyomino $P$ has the big empty quadrant structure, then $\H (P)\geq 4$.
\end{lemma}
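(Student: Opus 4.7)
The plan is to exhibit four copies of $P$ such that every three share a point but the four have empty intersection, as suggested by Figure~\ref{fig_big}. Let $P_1 = P'$, so $P_1$ contains the three corners $(x_1,y_1)$, $(x_1,y_2)$, $(x_2,y_1)$ and has its upper-right quadrant $\{x>x_1,\, y>y_1\}$ empty. I will construct the remaining three copies by composing a symmetry with an integer translation so that each $P_i$ plays the role of $P_1$ with respect to a different corner of the $2\times 2$ "frame" $\{x_1,x_2\}\times\{y_1,y_2\}$. Concretely, let $P_2$ be the reflection of $P_1$ through the horizontal line $y=(y_1+y_2)/2$, let $P_3$ be the reflection through the vertical line $x=(x_1+x_2)/2$, and let $P_4$ be the $180^\circ$ rotation about $((x_1{+}x_2)/2,(y_1{+}y_2)/2)$. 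Each of these maps takes integer points to integer points (since $x_1+x_2$ and $y_1+y_2$ are integers), so each $P_i$ is a valid copy of $P$.

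Next I would track the corner points and empty quadrants under these symmetries. A direct check shows that $P_i$ contains all four frame corners except one, and has empty quadrant directed away from that missing corner. Explicitly:
\begin{itemize}
\item $P_1$ omits $(x_2,y_2)$ and has $\{x>x_1,\,y>y_1\}=\emptyset$;
\item $P_2$ omits $(x_2,y_1)$ and has $\{x>x_1,\,y<y_2\}=\emptyset$;
\item $P_3$ omits $(x_1,y_2)$ and has $\{x<x_2,\,y>y_1\}=\emptyset$;
\item $P_4$ omits $(x_1,y_1)$ and has $\{x<x_2,\,y<y_2\}=\emptyset$.
\end{itemize}
This immediately yields the pairwise and triple intersections: the corner $(x_1,y_1)$ lies in $P_1\cap P_2\cap P_3$, and cyclically the other three frame corners witness the remaining triples.

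Finally I would argue that $P_1\cap P_2\cap P_3\cap P_4=\emptyset$. Suppose $(a,b)$ lies in all four copies. Being in $P_i$ rules out membership of the corresponding empty quadrant, giving the four disjunctions $(a\le x_1\lor b\le y_1)$, $(a\le x_1\lor b\ge y_2)$, $(a\ge x_2\lor b\le y_1)$, $(a\ge x_2\lor b\ge y_2)$. If $a\le x_1$, the last two force $b\le y_1$ and $b\ge y_2$, impossible since $y_1<y_2$; symmetrically if $a\ge x_2$; and if $x_1<a<x_2$, the first two force the same contradiction. Thus $\mathcal{H}(P)\ge 4$.

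The main obstacle is a bookkeeping one rather than a conceptual one: one must verify that each of the three symmetries I chose is realizable as an integer-lattice copy of $P$, and that each transports the empty quadrant of $P_1$ to precisely the quadrant that excludes the intended missing corner. Once this is done, both the triple-intersection claim and the four-intersection contradiction follow from a uniform case analysis on the position of the hypothetical common point relative to the vertical strip $x_1<a<x_2$ and horizontal strip $y_1<b<y_2$.
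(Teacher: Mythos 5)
Your proposal is correct and follows essentially the same approach as the paper: four copies obtained by flipping $P'$ so that each corner of the frame $\{x_1,x_2\}\times\{y_1,y_2\}$ is missed by exactly one copy, with the four reflected empty quadrants covering all of $\Z^2$. Your explicit case analysis on $a\le x_1$, $a\ge x_2$, $x_1<a<x_2$ just spells out the covering argument the paper states more briefly.
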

\begin{proof}
We construct an arrangement of four copies of $P$ such that every subset of three copies have a common point, but there is no point common to all four copies. 

Consider the three points $(x_1,y_2)$, $(x_1,y_1)$, and $(x_2,y_1)$ given by the big empty quadrant structure in $P$. We construct the copies by flipping $P$ around the $x$ and/or $y$ axis so that those three points map to all possible triples of points in the set $\{ (x_1, y_1), (x_1, y_2), (x_2, y_1), (x_2, y_2) \}$. Since $(x_2, y_2)\not\in P$, each of the four points is missing from exactly one copy $P_i$, but belongs to the other three (see Figure \ref{fig_big}). 

Now we observe that the empty quadrants of the four copies cover $\Z^2$. Hence for any $(x,y)\in\Z^2$, there exists at least one $i\in\{1,2,3,4\}$ such that $(x,y)\not\in P_i$. Therefore, the four copies have no common intersection point.
\end{proof}

We now consider polyominoes that have the small empty quadrant structure. We will use the following observation. 

\begin{obs}
\label{obs:El}
For any polyomino $P$ that is not a rectangle, there exists a $2\times 2$ rectangle $R$ such that $|P\cap R|=3$.
\end{obs}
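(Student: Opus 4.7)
The plan is to identify a $2\times 2$ rectangle $R$ with $|P\cap R|=3$ as a \emph{concave corner} of $P$ viewed as a union of closed unit squares in $\R^2$, and then argue by contradiction that a polyomino with no concave corner is a rectangle. For each grid vertex $v=(i,j)\in\Z^2$, set $n(v)=|P\cap\{(i-1,j-1),(i,j-1),(i-1,j),(i,j)\}|$; every $2\times 2$ rectangle is the set of four cells adjacent to a unique grid vertex $v$, so the observation amounts to exhibiting some $v$ with $n(v)=3$.

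I would assume for contradiction that $n(v)\neq 3$ at every $v$ and trace the boundary of $P$. Let $(x_0,y_0)\in P$ be the lexicographically smallest cell of $P$; by minimality, the vertex $v_0=(x_0,y_0)$ satisfies $n(v_0)=1$ and the unit edge from $v_0$ to $(x_0+1,y_0)$ lies on $\partial P$. Walking along $\partial P$ counter-clockwise from $v_0$, at each visited vertex the walk either continues straight ($n=2$ with two edge-adjacent $P$-cells) or turns left by $90^\circ$ ($n=1$); a right turn would require $n=3$, which is excluded. Since the total turning of a simple closed axis-parallel polygon equals $+360^\circ$, the outer boundary component of $P$ is a rectangle with exactly four left turns. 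The argument then finishes by ruling out missing interior cells: any such cell would introduce an additional closed boundary component whose turning analysis would once more demand a vertex with $n=3$, contradicting the hypothesis.

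The main obstacle is handling \emph{pinch points}, vertices $v$ with $n(v)=2$ whose two $P$-cells are diagonally placed; at such a vertex $\partial P$ is not a simple curve and the turning argument breaks down. I would eliminate pinch points by a local argument: since $P$ is edge-connected, the two diagonal cells at a pinch must be joined by an edge-path in $P$ that passes either above or below $v$, and this path forces an $n=3$ configuration at one of the two grid vertices immediately adjacent to $v$, contradicting the hypothesis. Once pinches are ruled out, $\partial P$ is a disjoint union of simple axis-parallel closed polygons and the turning-number computation goes through cleanly, delivering the claim.
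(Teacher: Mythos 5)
The paper states this Observation without proof, so there is no "official" argument to compare against; the question is whether your proof stands on its own. Your framework is sound: classifying each grid vertex $v$ by the number $n(v)$ of incident $P$-cells, noting that a $2\times2$ window with exactly three cells is precisely a vertex with $n(v)=3$, and running a turning-number argument on the boundary components. The outer-boundary and hole computations are correct. The gap is exactly at the step you yourself flagged as the main obstacle: the elimination of pinch points. Your claim that the edge-path joining the two diagonal cells ``forces an $n=3$ configuration at one of the two grid vertices immediately adjacent to $v$'' is false. Consider the C-shaped polyomino $P=\{(-1,-1),(-2,-1),(-3,-1),(-3,0),(-3,1),(-3,2),(-2,2),(-1,2),(0,2),(0,1),(0,0)\}$. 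It has a pinch at $v=(0,0)$ (the cells $(-1,-1)$ and $(0,0)$ lie in $P$ while $(-1,0)$ and $(0,-1)$ do not), yet the four grid vertices adjacent to $v$ have $n$-values $1,2,2,1$: the connecting path swings far away before re-entering, and no vertex with $n=3$ appears anywhere near the pinch. Edge-connectivity only supplies \emph{some} path, possibly long and distant, so no purely local inspection around $v$ can succeed. Since your turning argument is only valid once pinches are excluded, the proof as written does not go through.

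The statement you actually need --- that a polyomino with no vertex of type $n=3$ has no pinch points, or more directly is a rectangle --- is true, but it requires a genuinely global argument. One workable route: let $y_1$ be the top row of $P$ and $[a,b]$ a maximal run of $P$-cells there; for each $x\in[a+1,b]$ the vertex $(x,y_1)$ has both upper cells in $P$, so excluding $n=3$ forces the cells $(a,y_1-1),\dots,(b,y_1-1)$ to be all present or all absent, and the vertices $(a,y_1)$ and $(b+1,y_1)$ then confine the next row to the same column interval; iterating downward and invoking connectivity (to rule out a second run in any row) yields a rectangle. Alternatively, one can close your connecting path into a Jordan curve through the pinch and argue that it encloses one of the two empty cells, whose finite complementary component must exhibit an $n=3$ vertex --- but this also needs care, since the extremal cell of that component can itself sit at another pinch. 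Either way, the pinch step needs to be replaced by a non-local argument before the proof is complete.
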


\begin{figure}[tb]
	\begin{center}
	\includegraphics[width=0.6\textwidth]{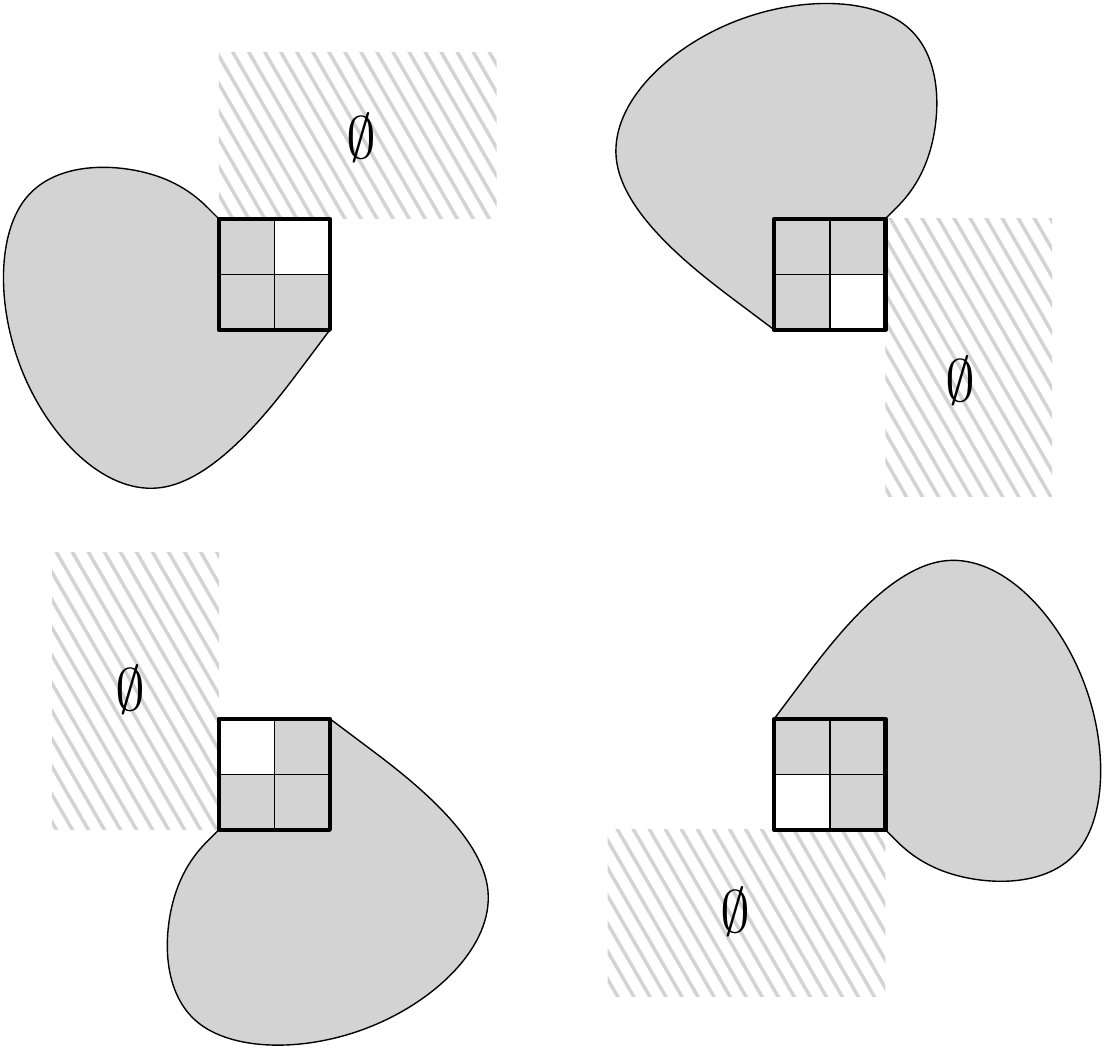}
	\caption{llustration of Lemma \ref{lem_small} for the case in which $I$ has cardinality 3 (denoted by a thick square). By translating the copies so that the respective squares overlap, we obtain a set in which any three copies have nonempty intersection. Since there is no common intersection point, $P$ cannot satisfy the $3$-Helly property.}
	\label{fig_small}	
	\end{center}
\end{figure}
\begin{lemma}\label{lem_small}
\label{lem:seqex}
If a polyomino $P$ has the small empty quadrant structure and is not a rectangle, then $\H (P)\geq 4$.
\end{lemma}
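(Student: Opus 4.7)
My plan is to build four copies of $P$ such that every three of them share a common cell but no cell is in all four, certifying $\H(P) \geq 4$. I split the argument by $|I \cap P'|$, which is $3$ or $4$ by definition of the small empty quadrant structure.

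Main case ($|I \cap P'| = 3$). Let $\rho_k$ denote rotation by $k \cdot 90^\circ$ about the common corner of the four cells of $I$. Take the four copies $\rho_0(P'), \rho_{90}(P'), \rho_{180}(P'), \rho_{270}(P')$. Since each $\rho_k$ permutes the cells of $I$, the four copies miss the four distinct corners of $I$, one each; hence every three copies share the corner none of them misses. For the four-way emptiness, each cell of $I$ is absent from exactly one copy, and for a cell $c \notin I$ one verifies by direct computation that the four rotated empty regions
\[
\rho_0(E),\ \rho_{90}(E),\ \rho_{180}(E),\ \rho_{270}(E), \qquad E = \{(x,y) : x \geq x_1,\ y > y_1\},
\]
together cover $\mathbb{Z}^2 \setminus I$ exactly; hence $c$ lies in some $\rho_k(E)$, so $\rho_k(P')$ misses $c$.

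Reduction case ($|I \cap P'| = 4$). Here the above construction fails because all four rotated copies contain $I$. The plan is to slide the corner to a new position with $|I^*| = 3$ while preserving the empty-quadrant property. Let $x^* \geq x_1$ be maximal with $(x^*, y_1), (x^*, y_1 - 1) \in P'$; the new corner $(x^*, y_1)$ inherits the empty-quadrant condition (its region is a subset of the original). If the new block $I^*$ has cardinality $3$, the main case applies. Otherwise $(x^*+1, y_1), (x^*+1, y_1-1) \notin P'$ and $|I^*| \leq 2$. Since $P$ is not a rectangle, Observation~\ref{obs:El} supplies a $2 \times 2$ block $R \subseteq P'$ with $|R \cap P'| = 3$ elsewhere; a short case analysis on where $R$ sits relative to the corner and the empty quadrant shows that either $R$ plays the role of $I$ in an alternative small-empty-quadrant configuration with cardinality $3$, or $P'$ witnesses a big-empty-quadrant configuration, in which case Lemma~\ref{lem:beqex} already yields $\H(P) \geq 4$.

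The main obstacle is the reduction for $|I \cap P'| = 4$: carefully tracing the boundary of $P'$ around the empty quadrant and combining it with Observation~\ref{obs:El} to guarantee termination at either a good small-empty-quadrant position with $|I^*| = 3$ or a big-empty-quadrant configuration. By contrast, the four-rotation construction in the main case is clean, its correctness resting on the elementary set-theoretic identity $\mathbb{Z}^2 \setminus I = \bigcup_{k=0}^{3} \rho_k(E)$.
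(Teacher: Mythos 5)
Your main case ($|I|=3$) is exactly the paper's construction: the four rotations fixing the $2\times 2$ block, with each cell of the block missed by exactly one copy and the four rotated empty quadrants covering $\Z^2$ minus the block. That part is fine. The gap is in your reduction for $|I|=4$. You assert that ``a short case analysis'' shows one can always relocate to a small empty quadrant with $|I^*|=3$ or else exhibit a big empty quadrant, but you never carry out that analysis, and it is not short: it is a structural dichotomy of roughly the same difficulty as Lemma~\ref{lem:bigsmallqu} itself. Concretely, the first half of the claimed dichotomy can genuinely fail: the ``thick L'' $([0,4]\times[0,1])\cup([0,1]\times[0,4])$ and the ``thick plus'' $([0,5]\times[2,3])\cup([2,3]\times[0,5])$ each have a small empty quadrant only with $|I|=4$ --- every $2\times 2$ block meeting them in exactly $3$ cells (there are only the inner-corner blocks) has all eight adjacent quadrants nonempty --- so no amount of sliding or invoking Observation~\ref{obs:El} produces an $|I^*|=3$ configuration. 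For those examples one must fall back on a big empty quadrant (which they do happen to have), but your proposal gives no argument that a big empty quadrant must exist whenever the relocation fails; without that, the $|I|=4$ case is simply unproved. Your explicit slide to $x^*$ also does not do what you say: maximality of $x^*$ does not make $(x^*-1,y_1),(x^*-1,y_1-1)$ members of $P'$, and when both cells of column $x^*+1$ are absent you land exactly in the unresolved situation.

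The paper avoids this entirely by handling $|I|=4$ with a direct construction of \emph{eight} copies: the four rotations fixing the full $2\times 2$ block (whose empty quadrants still exclude every point of $\Z^2$ outside the block), together with four rotations of a translate that maps a $2\times 2$ rectangle $R$ with $|P'\cap R|=3$ (guaranteed by Observation~\ref{obs:El} since $P$ is not a rectangle) onto that same block. Each of the four cells of the block then lies in all of $P_1,\dots,P_4$ and in exactly three of $P_5,\dots,P_8$, so every triple intersects while no point is common to all eight. If you want to keep a four-copy proof in the $|I|=4$ case you would need to actually prove the dichotomy you invoked; otherwise, adopt the eight-copy argument.
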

\begin{proof}
We construct an arrangement of at most 8 copies of $P$ such that every subset of three copies have a common point, but there is no point common to all copies. Let $(x_1, y_1)$ be the point given by the small empty quadrant structure, and $P'$ the corresponding copy of $P$.

We first consider the case in which the intersection $I$ of $P'$ with the $2\times 2$ rectangle $[x_1,x_1+1]\times[y_1-1,y_1]$ has cardinality exactly 3. In that case, we can use a similar construction as in Lemma~\ref{lem:beqex}, with four copies of $P$; we define the copies $P_i$ for $i=1,2,3,4$ as the four rotations of $P$ that map the bounding box of $I$ to the same $2\times 2$ rectangle. Those four points are the respective intersection points of all four possible triples. Similar to the previous case, the four empty quadrants cover all the other points of $\Z^2$, hence there cannot be a common intersection point (see Figure \ref{fig_small}).

It remains to consider the case in which the intersection $I$ has size 4. In this situation we use the same construction, but complete it with four more copies. From Observation~\ref{obs:El} and the fact that  $P$ is not a rectangle, we know that there exists a $2\times 2$ rectangle $R$ such that $|P'\cap R|=3$. We add four additional copies $P_i$, with $i=5,6,7,8$, that are the four rotations of a translated copy of $P'$ mapping $R$ to the bounding box of $L$. Each of the four points of this rectangle belongs to copies $P_1,P_2,P_3,P_4$ (since $|L|=4$), and to exactly three of the four copies $P_5, P_6, P_7, P_8$ (since $|P'\cap R|=3$). Hence every triple of copies intersects. However, from the previous construction, there still exists no point common to all 8 copies. We note that the above construction cannot be used if $P$ is a rectangle, since Observation~\ref{obs:El} does not hold in that case. 
\end{proof}

\begin{cor}\label{cor:H3}
There is no polyomino of Helly number 3.
\end{cor}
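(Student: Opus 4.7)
The plan is to combine the three preceding lemmas through a short case split, with almost no new work required. Given a polyomino $P$, I would first dispatch the easy case: if $P$ is a rectangle, Theorem~\ref{theo:H2} gives $\H(P)=2$, and in particular $\H(P)\neq 3$.

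The substantive case is when $P$ is not a rectangle. Before invoking the structural lemmas I would record a preliminary observation: a non-rectangle polyomino must have both height and width at least $2$. Indeed, a polyomino of height (respectively width) exactly $1$ is, by connectedness, a contiguous run of cells in a single row (respectively column), hence a $1\times n$ rectangle. With this hypothesis secured, Lemma~\ref{lem:bigsmallqu} applies and forces $P$ to admit either the big or the small empty quadrant structure.

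I would then branch one last time. If $P$ has the big empty quadrant structure, Lemma~\ref{lem:beqex} directly gives $\H(P)\geq 4$. If $P$ has the small empty quadrant structure, then since I have already ruled out rectangles, Lemma~\ref{lem:seqex} applies and again yields $\H(P)\geq 4$. In either subcase $\H(P)\neq 3$, finishing the proof.

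I do not foresee any substantial obstacle: the corollary is essentially a packaging step for the three preceding lemmas. The only point that needs explicit care, rather than being wholly routine, is the preliminary reduction to the case of height and width at least $2$, which is required to legitimately invoke Lemma~\ref{lem:bigsmallqu}.
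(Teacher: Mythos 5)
Your proposal is correct and follows exactly the derivation the paper intends: the corollary is stated without proof, but it is precisely the packaging of Theorem~\ref{theo:H2} (rectangles have Helly number 2) with Lemmas~\ref{lem:bigsmallqu}, \ref{lem:beqex} and \ref{lem:seqex} (non-rectangles have Helly number at least 4) that you describe. Your explicit care with the height/width-at-least-2 hypothesis of Lemma~\ref{lem:bigsmallqu} is a point the paper also acknowledges inside that lemma's proof, so there is no divergence.
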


Combining this result with the upper bound of \cite{d-ccl-73}, we can compute the Helly number of any convex polyomino:

\begin{cor}\label{cor_convexpoly}
Let $P$ be a polyomino that is the intersection a convex set in $\R^2$ with $\Z^2$. If $P$ is a rectangle then $\H (P)= 2$. Otherwise $\H (P)= 4$. 
\end{cor}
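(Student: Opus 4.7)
The plan is to chain together three previously established ingredients: Theorem~\ref{theo:H2} (rectangles are exactly the polyominoes with Helly number $2$), Corollary~\ref{cor:H3} (no polyomino has Helly number $3$), and Doignon's bound of $2^d$ on the Helly number of convex lattice sets in $\Z^d$~\cite{d-ccl-73}, specialised to $d=2$.

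First I would dispose of the rectangle case: if $P$ is a rectangle, Theorem~\ref{theo:H2} immediately gives $\H(P)=2$. The remaining case is when $P$ is a convex polyomino that is not a rectangle, and I would split this into matching upper and lower bounds. For the upper bound, the point to verify is that every copy of $P$, that is, every image of $P$ under an integer translate composed with one of the eight symmetries of the unit square, is again the intersection of a convex set in $\R^2$ with $\Z^2$: convexity of the underlying real set is preserved by translations and by reflections and by rotations by multiples of $90^\circ$. Thus any finite family of copies of $P$ is a family of convex lattice sets, and Doignon's theorem yields $\H(P)\leq 4$. For the lower bound, $\H(P)\neq 2$ by the converse direction of Theorem~\ref{theo:H2} (since $P$ is not a rectangle), $\H(P)\neq 3$ by Corollary~\ref{cor:H3}, and $\H(P)\neq 1$ because no finite polyomino is $1$-Helly, as already observed at the beginning of Section~2. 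Combining, $\H(P)=4$.

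There is essentially no technical obstacle: all the real work has been carried out in Theorem~\ref{theo:H2}, Corollary~\ref{cor:H3}, and Doignon's theorem. The only subtlety worth spelling out is the invariance remark above, namely that applying one of the eight axis-aligned symmetries of the square and an integer translation to a convex lattice set produces another convex lattice set, which is precisely what permits Doignon's bound to be transported from arbitrary convex lattice families to families of copies of a fixed convex polyomino.
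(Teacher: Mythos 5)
Your proposal is correct and follows exactly the route the paper intends: the corollary is stated immediately after Corollary~\ref{cor:H3} with the remark that it follows by combining that result (together with Theorem~\ref{theo:H2}) with Doignon's $2^d$ bound, which is precisely your argument. Your extra observation that the eight symmetries and integer translations preserve the property of being a convex lattice set is a worthwhile detail that the paper leaves implicit.
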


\section{Hypergraph Generalization}\label{sec_upper}
In this section we study some interesting properties of polyominoes of Helly number $k$. Since these results hold for subsets of a discrete set of points, we state them in a more general fashion. Instead of copies of a given polyomino we can consider the same definitions for families of subsets of $\Z^2$. Using this idea, one can extend the Helly property to hypergraphs.

\begin{definition}
A hypergraph $G=(V,\E)$ is $k$-Helly if for any $\W \subseteq \E$ such that  $e_1 \cap \ldots \cap e_k \neq \emptyset$ for all $e_1,\ldots, e_k \in \W$, we have $\cap
_{e\in \W} e \neq \emptyset$. The {\em Helly number} $\H(G)$ of a hypergraph $G$ is the smallest value $k$ such that $G$ is $k$-Helly. 
\end{definition}

Observe that the above definition is a generalization of the previous definition for the polyomino case. Indeed, the polyomino formulation is the particular case in which $V=\Z^2$ and $\E$ contains all subsets of points contained in copies of a fixed polyomino $P$. Helly numbers of hypergraphs have been deeply studied; see for example the book of Graham, Gr\"{o}tschel, and Lov\'{a}sz (\cite{ggl-hoC-95}, Chapters 2 and 4), or the book of Berge (\cite{Berge}, Chapter 1), where relationship between conformal and 2-Helly hypergraphs is studied. There has also been a strong interest in computational aspects of this problem (like for example recognition); see the survey of Dourado, Protti, and Szwarcfiter \cite{hypersurvey}. Also see the paper of Barbosa {\em et al.}~\cite{colorHelly}, in which the chromatic variant of the Helly property is studied.

Let $G$ be a hypergraph that is not $k$-Helly. By definition, there exists a subset $\W \subseteq \E$ such that $\cap_{e\in \W} e =\emptyset$ and $e_1\cap \ldots \cap e_{k}\neq \emptyset$ for any $e_1,\ldots, e_{k}\in \W$. Any such family is called a  a $k$-{\em witness} set of $G$. For every $V'\subset V$, define the {\em restriction} of $G$ to $V'$ as $G|_{V'} = (V',\E|_{V'})$, where $\E|_{V'} = \{e\cap V' | e\in \E\}$. With these definitions we can prove an upper bound on the Helly number of any hypergraph:

\begin{theorem}\label{theo_upper}
Let $G=(V,\E)$ be a hypergraph. If $|e|\leq k$ $\forall e \in \E$, then $G$ is $(k+1)$-Helly.
\end{theorem}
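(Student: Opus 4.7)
The plan is to proceed by contradiction, exploiting the fact that a single edge of bounded size cannot contain too many ``forbidden witnesses''. Suppose $G$ is not $(k+1)$-Helly. Then there exists a family $\W \subseteq \E$ such that $\bigcap_{e \in \W} e = \emptyset$, while any $k+1$ edges of $\W$ do have a common element.

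The key observation is that we only need to find \emph{one} edge of $\W$ whose elements can each be ``killed'' by some other edge, and then combine them all into a single bad subfamily. Concretely, pick any edge $e_0 \in \W$, and write $e_0 = \{v_1, \dots, v_m\}$ with $m \leq k$. For each $v_i \in e_0$, the assumption $\bigcap_{e \in \W} e = \emptyset$ guarantees that there exists some $e_i \in \W$ with $v_i \notin e_i$; otherwise every $v_i$ would lie in every edge of $\W$, contradicting the emptiness of the total intersection.

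Now consider the subfamily $\{e_0, e_1, \dots, e_m\} \subseteq \W$, which has cardinality at most $m + 1 \leq k + 1$. By the $(k+1)$-wise intersection hypothesis, these edges share a common point $p$. In particular $p \in e_0$, so $p = v_j$ for some $j \in \{1, \dots, m\}$. But also $p \in e_j$, contradicting the choice of $e_j$ as an edge missing $v_j$.

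There is no real obstacle here; the only subtlety is making sure that the subfamily $\{e_0, e_1, \dots, e_m\}$ contains at most $k+1$ edges (which is exactly where the bound $|e_0| \leq k$ is used) and that the $e_i$ can indeed be chosen, which follows directly from $\bigcap_{e \in \W} e = \emptyset$. Thus the witness family $\W$ cannot exist, and $G$ is $(k+1)$-Helly.
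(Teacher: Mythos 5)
Your proof is correct, but it takes a genuinely different route from the paper. The paper argues by induction on $k$: given a $(k+1)$-witness set $\W$, it picks a maximum-size edge $e$, restricts the whole hypergraph to the vertex set $e$, and applies the induction hypothesis to the restricted hypergraph (whose edges have size at most $k-1$) to produce a common point inside $e$. You instead give a direct, one-step argument: enumerate the at most $k$ vertices of a single edge $e_0$, use the emptiness of $\bigcap_{e\in\W}e$ to choose for each vertex $v_i$ an edge $e_i$ avoiding it, and observe that the subfamily $\{e_0,e_1,\dots,e_m\}$ has at most $k+1$ members yet empty intersection, contradicting the $(k+1)$-wise intersection hypothesis. (The small edge cases are harmless: if some $v_i$ admits no avoiding edge then $v_i$ lies in the total intersection, and if $e_0=\emptyset$ the contradiction is immediate.) Your argument is in fact the same mechanism as the paper's later Lemma~\ref{lem_notk} specialized to $\ell=1$: a $(k+1)$-witness forces every edge to have size at least $k+1$. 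What your approach buys is brevity and the avoidance of induction and of the restriction operation $G|_{V'}$; what the paper's approach buys is a warm-up for the restriction technique it reuses in Lemma~\ref{lem_heavycover} and in Section~\ref{sechihg}. Both are valid; yours is arguably the cleaner proof of this particular statement.
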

\begin{proof}
We will show the result by induction on $k$. Observe that the claim for $k=0$ is trivial, hence we focus on the induction step. Assume otherwise: let $\W \subseteq \E$ be a $(k+1)$-witness set, and $e$ be an edge of maximum size among those of $\W$ (by hypothesis we know that $|e|\leq k$). 

Consider the hypergraph $G' = (e,\W|_e \setminus \{e\})$ (that is, we disregard all other vertices except those contained in $e$). Since $|e|\leq k$, its intersection with any other edge of $\W$ must be of size at most $k-1$. Furthermore, every $k$-tuple of edges in $G'$ have a common intersection (since every $k+1$ tuple in $\W$ including $e$ had a common intersection). Therefore, by induction $G'$ is $k$-Helly. In particular all edges in $G'$ have a common intersection, which by construction intersects $e$ and contradicts the witness property.
\end{proof}

\begin{cor}\label{cor_upper}
Any polyomino $P$ satisfies $\H(P)\leq |P|+1$.
\end{cor}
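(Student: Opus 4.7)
The plan is to recognize the corollary as a direct specialization of Theorem~\ref{theo_upper} to the polyomino setting. First I would encode the polyomino situation as a hypergraph: take $V=\Z^2$ and let $\E$ be the collection of all copies of $P$ (that is, integer translates of the eight symmetric images of $P$). By unpacking the two definitions, the polyomino Helly number $\H(P)$ from the introduction coincides exactly with the hypergraph Helly number $\H(G)$ of $G=(V,\E)$, since both ask for the smallest $k$ such that every subfamily $\W$ in which all $k$-wise intersections are nonempty already has a common point.

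Next I would observe the trivial but key fact that every edge of $G$ has cardinality precisely $|P|$, because each copy of $P$ is obtained from $P$ by an isometry that preserves cardinality. Hence the hypothesis of Theorem~\ref{theo_upper} is satisfied with $k=|P|$, and the theorem immediately gives that $G$ is $(|P|+1)$-Helly. Translating this back into the polyomino language yields $\H(P)\leq |P|+1$, which is exactly the statement of the corollary.

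There is essentially no obstacle here: the entire content of the proof is the identification of the polyomino Helly definition with the hypergraph Helly definition and the remark that copies of $P$ all have size $|P|$. The only thing worth spelling out for the reader is why the two Helly notions coincide on this particular hypergraph, but that follows directly from comparing the two definitions introduced in the paper.
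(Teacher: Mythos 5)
Your proposal is correct and matches the paper's own argument exactly: the paper proves the corollary by noting that the associated hypergraph is $|P|$-uniform and applying Theorem~\ref{theo_upper}. Your additional remark on why the two Helly definitions coincide is a reasonable elaboration of what the paper leaves implicit.
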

The proof is direct from the fact that the associated hypergraph is $|P|$-uniform. We also note that the bound of Corollary \ref{cor_upper} is tight: the polyomino $\{(0,0), (1,0)$,$(0,1)\}$ (commonly referred as {\em El} \cite{b-cgttt-08}) has cardinality $3$ and contains the small empty quadrant structure. In particular, by Lemma \ref{lem:seqex} its Helly number must be at least $4$.



In the following we give a few more tools to use when proving that a given hypergraph is $k$-Helly (or equivalently, that there cannot exist a $k$-witness). 

\begin{lemma}\label{lem_notk}
Any $k$-witness $\W$ of a hypergraph $G$ satisfies $|\W|\geq k+1$ and $|e_1\cap \ldots \cap e_\ell|\geq k-\ell+1$ for all $e_1,\ldots, e_\ell \in \W$.
\end{lemma}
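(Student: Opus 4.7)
The plan is to prove both statements by contradiction, in each case exploiting the $k$-wise intersection property of $\W$ on a carefully chosen $k$-tuple of its edges (where repetitions among the $e_i$ are allowed, as suggested by the notation ``$e_1,\dots,e_k\in \W$'' in the definition).

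For the first claim, $|\W|\geq k+1$: suppose instead that $|\W|\leq k$. Then the global intersection $\bigcap_{e\in\W}e$ is itself realized as the intersection of some $k$-tuple of elements of $\W$, obtained by listing the edges of $\W$ and padding with repeated copies of an arbitrary edge until the list has length $k$. By the defining property of the witness, every such $k$-tuple has nonempty intersection, so $\bigcap_{e\in\W}e\neq\emptyset$, contradicting the other defining property of the witness.

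For the second claim, fix $e_1,\dots,e_\ell\in\W$, let $I=e_1\cap\dots\cap e_\ell$, and assume for contradiction that $|I|\leq k-\ell$. Write $I=\{p_1,\dots,p_m\}$ with $m\leq k-\ell$. Since $\bigcap_{e\in\W}e=\emptyset$, for each $p_i$ there exists $f_i\in\W$ with $p_i\notin f_i$. Now assemble a $k$-tuple from $\W$ consisting of $e_1,\dots,e_\ell$, $f_1,\dots,f_m$, and $k-\ell-m\geq 0$ extra copies of $e_1$ as padding. Its intersection is contained in $I\cap f_1\cap\dots\cap f_m$, which is empty since every $p_i\in I$ is excluded by $f_i$. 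This contradicts the $k$-wise intersection property of $\W$, so we must have $|I|\geq k-\ell+1$.

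The only real subtlety — and the ``hardest'' point of the argument — is that both steps depend on being allowed to repeat edges in the $k$-tuple used to invoke the witness property; without repetition, one cannot pad a short list of edges up to length $k$, and neither contradiction goes through. The notation $e_1,\dots,e_k\in\W$ used in the definition of $k$-Helly naturally permits this, so the arguments are legitimate.
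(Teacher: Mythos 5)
Your proposal is correct and follows essentially the same route as the paper: the first claim is the observation that a family of at most $k$ edges cannot simultaneously satisfy the $k$-wise intersection property and have empty total intersection, and the second claim picks, for each point $v_i$ of the $\ell$-fold intersection, an edge $f_i$ omitting it and derives a contradiction from the at-most-$k$ edges $e_1,\dots,e_\ell,f_1,\dots,f_m$ having empty intersection. Your explicit remark about padding with repeated edges is just a more careful spelling-out of the same convention the paper uses implicitly when it invokes the witness property on a set of size ``at most $k$.''
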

\begin{proof}
Observe that the first claim is trivial, since if $\W$ has size $k$ or less it cannot have an empty intersection. The proof of the second claim is by contradiction: assume otherwise and let $e_1,\ldots, e_\ell \in \W$ such that such that $e_1\cap \ldots \cap e_\ell=\{v_1,\ldots, v_m\}$ for some $m\leq k-\ell$. Since $\cap_{e\in \W} e =\emptyset$, for any $i\leq k-\ell$ there exists $f_i\in \W$ such that $v_i\not \in f_i$. 

Consider now the intersection of $e_1 \cap \ldots \cap e_\ell \cap f_1\cap \ldots \cap f_m$: by construction, this set is empty. Moreover, the size of the set $\{e_1,\ldots, e_\ell,f_1,\ldots, f_m\}$ is at most $\ell+ m\leq \ell +k-\ell=k$, which contradicts the witness property of $\W$.
\end{proof}



For any hypergraph $G$ and vertex $v\in V$, we define $c_v=\{e \in \W, v \in e\}$ as the edges that contain $v$. In the following we show that we can ignore vertices that are not heavily covered.

\begin{lemma}\label{lem_heavycover}
Let  $\W$ be a $k$-witness set of $G$ and let $V' = \{v \in V, |c_v| \geq k\}$. The set $\W|_{V'}$ is a k-witness for $G|_{V'}$.
\end{lemma}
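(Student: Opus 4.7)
The plan is to verify the two defining properties of a $k$-witness for the restricted hypergraph $G|_{V'}$: that the global intersection of $\W|_{V'}$ is empty, and that every $k$-wise intersection inside $\W|_{V'}$ is non-empty.

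The first property is essentially for free. Since $\bigcap_{e \in \W} e = \emptyset$ and $e \cap V' \subseteq e$ for each $e$, we get $\bigcap_{e \in \W}(e \cap V') \subseteq \bigcap_{e \in \W} e = \emptyset$, so no element is common to every restricted edge.

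The substantive part is the $k$-wise intersection property, and the natural tool is Lemma \ref{lem_notk}. Concretely, I would pick any $k$ distinct edges $f_1,\ldots,f_k \in \W|_{V'}$ and lift them to edges $e_1,\ldots,e_k \in \W$ with $e_i \cap V' = f_i$; because the $f_i$ are pairwise distinct as subsets of $V'$, the chosen $e_i$ must be pairwise distinct as well (if $e_i = e_j$ then $f_i = f_j$). Applying Lemma \ref{lem_notk} to these $k$ distinct edges of the witness $\W$ with $\ell = k$ gives $|e_1 \cap \cdots \cap e_k| \geq 1$, so we may select some $v \in e_1 \cap \cdots \cap e_k$.

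The key observation is then that this $v$ lies in at least the $k$ distinct edges $e_1,\ldots,e_k$ of $\W$, so $|c_v| \geq k$ and therefore $v \in V'$ by the definition of $V'$. Hence $v \in V' \cap e_1 \cap \cdots \cap e_k = f_1 \cap \cdots \cap f_k$, establishing the required non-emptiness. I do not anticipate a real obstacle here; the only subtle point is making sure the lifting of restricted edges to the original edges preserves distinctness, which follows automatically from $f_i \neq f_j \Rightarrow e_i \neq e_j$.
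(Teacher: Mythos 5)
Your proof is correct and takes essentially the same approach as the paper: the empty global intersection is inherited trivially, and any point common to $k$ distinct edges is covered by at least $k$ edges of $\W$, hence lies in $V'$ and survives the restriction. (The paper obtains the non-emptiness of $e_1\cap\ldots\cap e_k$ directly from the definition of a witness set rather than via Lemma~\ref{lem_notk} with $\ell=k$, but these are the same fact.)
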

\begin{proof}
Observe that $\cap_{e\in \W}e=\emptyset$ implies $\cap_{e\in \W|_{V'}}e=\emptyset$. Hence, it suffices to show that $e_1 \cap \ldots \cap e_k \cap V' \neq \emptyset$, for any $e_1,\ldots, e_k\in \W$,

Let $S=e_1\cap\ldots\cap e_k$. Observe that, since $\W$ is a witness set, we have $S\neq \emptyset$. Moreover all points of $S$ are covered by at least $k$ hyperedges (since they are contained in $e_1, \ldots, e_k$). Hence we have  $S\subseteq V'$. In particular, we obtain $e_1 \cap \ldots \cap e_k = e_1 \cap \ldots \cap e_k \cap V' \neq \emptyset$ which proves the Lemma.
\end{proof}

Lemma \ref{lem_notk} gives a lower bound on the size of a witness set. We use a similar reasoning to find an upper bound as well:

\begin{lemma}\label{lem_boundsize}
Let $G$ be any hypergraph such that $\H(G)=k$. There exists a $(k-1)$-witness set $\W\subseteq \E$ of $P$ such that $|\W|=k$.
\end{lemma}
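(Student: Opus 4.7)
The plan is to take any $(k-1)$-witness of $G$ (whose existence is guaranteed by $\H(G)=k$) and extract a sub-witness of size exactly $k$ by applying the $k$-Helly property in contrapositive form.

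First, since $\H(G)=k$ the hypergraph $G$ is not $(k-1)$-Helly, so some $(k-1)$-witness $\W\subseteq\E$ exists. By definition $\bigcap_{e\in\W}e=\emptyset$, and every $(k-1)$-subset of $\W$ has nonempty intersection; Lemma~\ref{lem_notk} already guarantees $|\W|\geq k$, so $\W$ has at least one $k$-element subfamily to play with.

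Second, I would use that $G$ \emph{is} $k$-Helly. Read contrapositively, the $k$-Helly property says: whenever a subfamily has empty overall intersection, it cannot be the case that all of its $k$-element subfamilies have nonempty intersection. Applying this to $\W$, there must exist some $\W'\subseteq\W$ with $|\W'|=k$ and $\bigcap_{e\in\W'}e=\emptyset$.

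Finally, I would verify that this $\W'$ is the desired $(k-1)$-witness. Its total intersection is empty by construction, and any $(k-1)$-element subset of $\W'$ is in particular a $(k-1)$-element subset of $\W$, so by the witness property of $\W$ it has nonempty intersection. Hence $\W'$ satisfies the definition of a $(k-1)$-witness and has size exactly $k$. I do not foresee a real obstacle; the only observation to make is that although the hypothesis $\H(G)=k$ is stated as an upper bound on the Helly number, its content is precisely the tool that lets us shrink an arbitrary witness down to the minimum size forced by Lemma~\ref{lem_notk}.
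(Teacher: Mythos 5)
Your proof is correct and follows essentially the same route as the paper's: both arguments combine Lemma~\ref{lem_notk} with the $k$-Helly property of $G$ applied to a witness family, the only difference being one of presentation. The paper takes a minimal witness and derives a contradiction (if it had more than $k$ elements, minimality would force every $k$-subfamily to intersect, and $k$-Hellyness would then make the total intersection nonempty), whereas you extract the size-$k$ subfamily directly from the contrapositive of the $k$-Helly property and check the witness conditions; the mathematical content is identical.
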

\begin{proof}
Let $\W_{\min}$ be the $(k-1)$-witness set of smallest size (pick any arbitrarily if many exist) and let $m=|\W_{\min}|$. By Lemma \ref{lem_notk} we have $m\geq k$. If $m=k$ we are done, thus we focus in the $m>k$ case. 

By minimality of $\W_{\min}$, there cannot exist a proper subset $\W' \subset \W_{\min}$ such that $\cap_{A\in \W'} A =\emptyset$ (otherwise we would have a witness set of smaller size). In particular, any subset $\{e_1,\ldots, e_k\} \subset \W_{\min}$ must have non-empty intersection. Since $G$ is $k$-Helly, we have $\cap_{e\in \W_{\min}} e \neq\emptyset$ which contradicts the witness property.
\end{proof}


\section{Higher Helly Numbers}\label{sechihg}
In the following we use the above tools to show the existence of polyominoes of Helly number $k$ (for any $k\geq 5$). For any $q\in \N$, let $F_q$ be the union of rectangles $[-\lfloor q/2 \rfloor,-1]\times [0,0]$, $[1,q]\times [0,0]$ and $[-1,1]\times [1,1]$. Observe that $|F_q|=\lfloor3q/2\rfloor+3$, see Figure \ref{fig_khelly}.

\begin{figure}[tb]
\center
\includegraphics[width=0.6\textwidth]{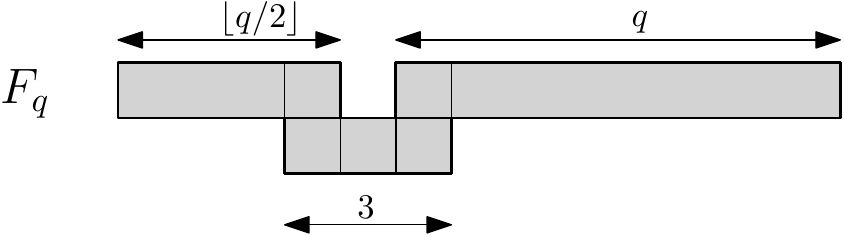}
\caption{Polyomino $F_{q}$. In Section \ref{sechihg} we show that $F_q$ has Helly number $q+1$ for any $q\geq 4$.}
\label{fig_khelly}
\end{figure}

\begin{figure}[tb]
\center
\includegraphics[width=.6\textwidth]{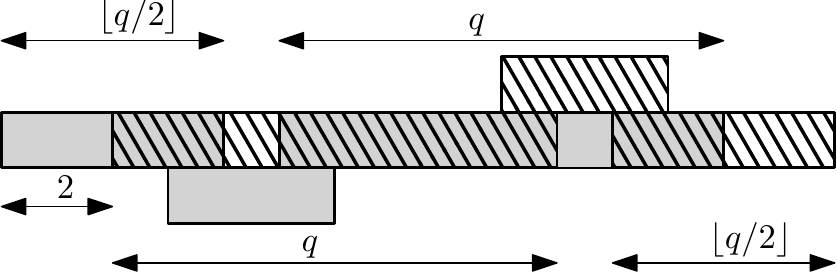}
\caption{Polyominoes $A_0$ (solid) and $B_2$ (dashed). In the example $q=8$.}
\label{fig_witness}
\end{figure}

\begin{figure}[tb]
\center
\includegraphics[width=0.6\textwidth]{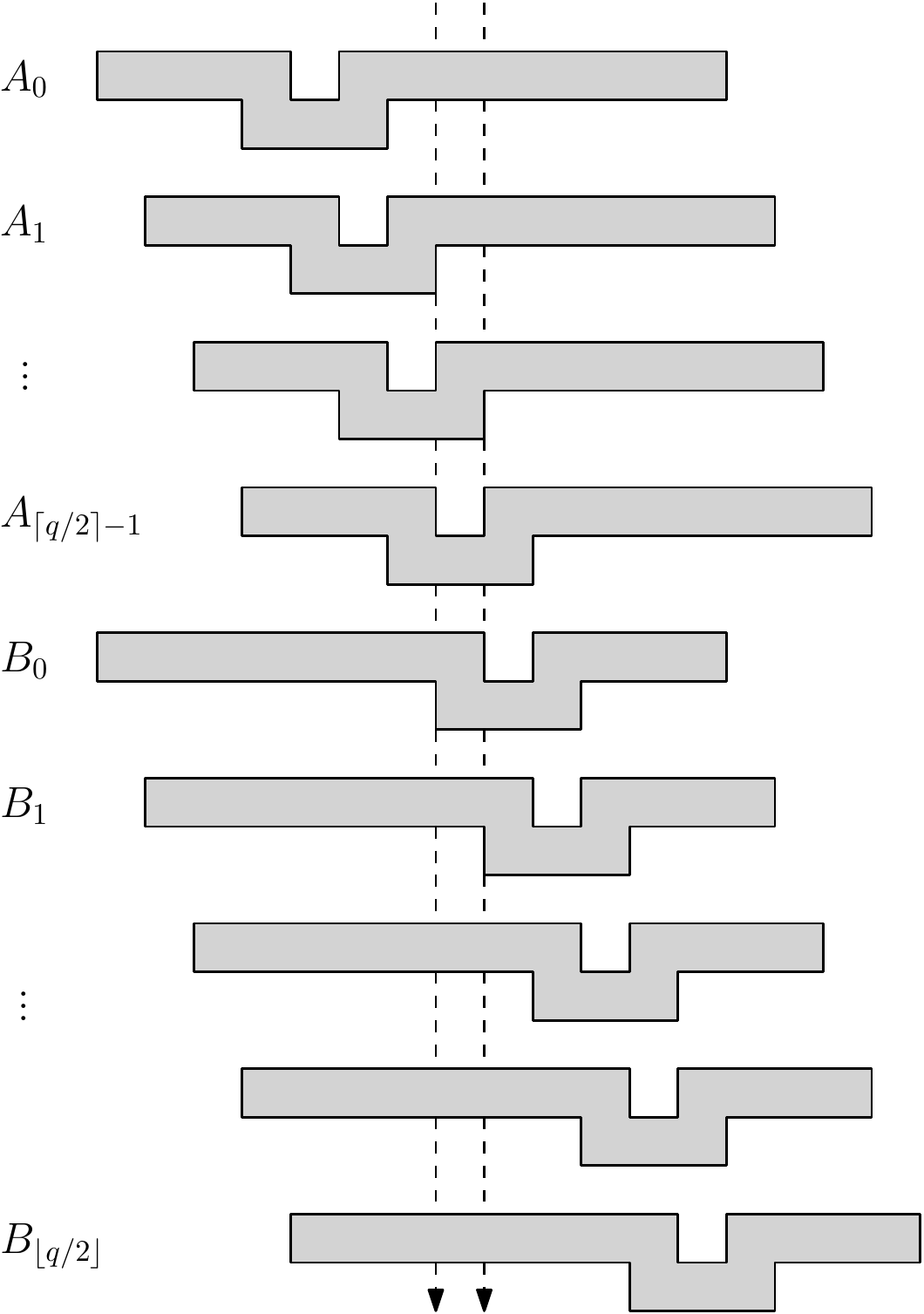}
\caption{$q$-Witness set for polyomino $F_q$ (for clarity, each of the copies has been shifted vertically). Observe that, although the intersection of the witness set is empty, any $q$ elements of the set have nonempty intersection. In the figure, we depicted with a vertical strip the point that is contained in all polyominoes except $A_{\lceil q/2 \rceil}-1$.}
\label{fig_counterw}
\end{figure}

\begin{lemma}\label{lem_gadget}
For any $q\geq 4$, we have $\H(F_q)=q+1$. 
\end{lemma}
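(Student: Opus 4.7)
My plan is to prove $\H(F_q) = q+1$ by establishing matching lower and upper bounds.

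For the lower bound $\H(F_q) \geq q+1$, I would exhibit the explicit family of $q+1$ copies depicted in Figures~\ref{fig_witness} and~\ref{fig_counterw}. The family consists of $\lceil q/2 \rceil$ copies $A_0, A_1, \ldots, A_{\lceil q/2 \rceil - 1}$ obtained as horizontal translates of $F_q$ in its canonical orientation, together with $\lfloor q/2 \rfloor + 1$ copies $B_0, B_1, \ldots, B_{\lfloor q/2 \rfloor}$ obtained as horizontal translates of the $180^\circ$-rotated copy of $F_q$. I would tune the translations so that every copy contains the same three-cell bump at $y = 1$ in columns $-1, 0, 1$, while the single-cell gap in the row $y = 0$ walks through $q+1$ consecutive columns, realizing a different gap column for each copy. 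Then any $q$ of the copies meet at the gap column of the excluded copy, while the full intersection contains no cell in the gap row. The asymmetry $q > \lfloor q/2 \rfloor$ between the two arm lengths of $F_q$ is precisely what enables $q+1$ distinct gap positions to be achieved while keeping all copies aligned on the shared bump.

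For the upper bound $\H(F_q) \leq q+1$, I would argue by contradiction. Suppose $F_q$ admits a $(q+1)$-witness; then by Lemma~\ref{lem_boundsize} a $(q+1)$-witness $\W$ exists with $|\W| = q+2$. For each $C \in \W$, applying the witness property to $\W \setminus \{C\}$ yields a point $p_C \in \bigcap_{C' \in \W \setminus \{C\}} C'$ with $p_C \notin C$, and a short argument shows the $p_C$ are pairwise distinct (if $p_C = p_{C'}$ for $C' \neq C$, then $p_C \in C'$ but also $p_{C'} \notin C'$, a contradiction). We therefore obtain $q+2$ points such that each copy $C \in \W$ contains exactly the $q+1$ points $\{p_{C'} : C' \neq C\}$. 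By Lemma~\ref{lem_heavycover} applied with $V' = \{p_C : C \in \W\}$, it suffices to rule out this combinatorial incidence pattern.

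The crux is then a geometric case analysis using the rigid shape of $F_q$: a single-row bar of length $q + \lfloor q/2 \rfloor + 1$ with a one-cell gap, plus a perpendicular three-cell bump that pins down the location of that gap. For each copy $C \in \W$, the $q+1$ points of $V' \setminus \{p_C\}$ must embed into $F_q$, which strongly restricts both the orientation and the translation of $C$. A pigeonhole argument over the possible positions of the bumps and gaps of the $q+2$ copies, combined with the arm-length asymmetry $q > \lfloor q/2 \rfloor$ that distinguishes left-facing from right-facing translates, should force two distinct copies $C, C' \in \W$ to agree on how they cover $V'$, ultimately placing $p_C$ into $C$ and contradicting its construction. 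The main obstacle will be organizing this case analysis cleanly across the eight symmetry orientations of $F_q$; I expect the arm-length asymmetry to rule out most alignments of flipped or rotated copies, leaving only a small handful of genuine subcases that require explicit coordinate checks.
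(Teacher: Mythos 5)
Your lower bound is essentially the paper's: the same family of $\lceil q/2\rceil$ translates $A_i$ and $\lfloor q/2\rfloor+1$ rotated translates $B_j$, with each copy missing exactly one of $q+1$ designated points in the row $y=0$. One detail as stated is impossible, though: distinct horizontal translates of the same oriented copy cannot all ``contain the same three-cell bump at $y=1$ in columns $-1,0,1$'' (and the $B_j$'s have their bump at $y=-1$). What actually makes the construction work is that all copies overlap along the main row and their gaps occupy $q+1$ distinct columns there; the bumps are at different positions and play no role in the pairwise intersections.

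The upper bound is where the proposal has genuine gaps. First, the reduction to a $(q+1)$-witness of size exactly $q+2$ does not follow from Lemma~\ref{lem_boundsize}: that lemma produces a $(k-1)$-witness of size $k$ where $k=\H(G)$, and if $F_q$ fails to be $(q+1)$-Helly then $\H(F_q)$ could a priori be much larger than $q+2$; a $(q+1)$-witness need not contain a sub-witness of size $q+2$ (every subfamily of size $q+2$ of a minimal witness of larger size has nonempty intersection). Second, and more importantly, the entire geometric core of the argument --- ruling out the incidence pattern of $q+2$ points each omitted by exactly one copy --- is deferred with ``should force'' and ``I expect.'' That case analysis is precisely where all the work lies, and it is not clear it closes along the route you describe. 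The paper's argument is different and concrete: normalize so the leftmost copy of the witness is $A_0$; Lemma~\ref{lem_notk} forces every other copy to share at least $q\geq 4$ points with $A_0$, which immediately eliminates $90/270$-degree rotations (at most $3$ common points) and, with one more step, horizontal flips (at most $4$); then Lemma~\ref{lem_heavycover} lets one discard the bump cells of $A_0$ and restrict attention to the rectangle $[0,\lfloor 3q/2\rfloor]\times[0,0]$, where a direct count shows only $2\lfloor q/2\rfloor\leq q$ other copies can share $\geq q$ points with $A_0$, contradicting the $q+1$ required by Lemma~\ref{lem_notk}. You would need to supply an argument of comparable specificity to complete your upper bound.
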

\begin{proof}
We show the lower bound by constructing a $q$-witness set $\W$ of $F_q$. For any $i\leq q$, we define $A_i$ as the copy of $F_q$ translated such that the leftmost point is at position $(i,0)$. Analogously, we define polyomino $B_i$ as the $180$-degree rotation of  $F_q$ translated so as the leftmost point is at position $(i,0)$ (see Figure \ref{fig_witness}). Consider now set $\W=\{A_0,\ldots, A_{\lceil q/2 \rceil-1}, B_{0}, B_0, \ldots, B_{\lfloor q/2 \rfloor}\}$; observe that $|\W|=\lceil q/2 \rceil+{\lfloor q/2 \rfloor}+1=q+1$ and that the intersection between polyominoes $A_i$ and $B_j$ is in the rectangle $[0,\lfloor 3q/2 \rfloor]\times [0,0]$ (for any $i$ and $j$). 

More interestingly, for any $0 \leq i \leq \lceil q/2 \rceil-1$, polyomino $A_i$ does not contain point $(\lfloor q/2 \rfloor+i,0)$ (and this point is contained in all other polyominoes). The same result holds for polyomino $B_i$: for any $0 \leq i \leq \lfloor q/2 \rfloor$, point $(q+i,0)$ is contained in all polyominoes except $B_i$. In particular, we have $\cap_{C\in \W} C =\emptyset$ and any subset of size $q$ has nonempty intersection (see Figure \ref{fig_counterw}). Hence, $\W$ is a $q$-witness set of $F_q$. 

In order to finish the proof of the Lemma, we must show that polyomino $F_q$ indeed is $(q+1)$-Helly. Assume that $F_q$ is not $(q+1)$-Helly. Let $\W$ be a $(q+1)$-witness set and let $A$ be the leftmost copy of $F_q$ in $\W$ (pick any arbitrarily if more than one exist). Without loss of generality, we can assume that $A=A_0$. By Lemma \ref{lem_notk}, there must exist at least $q+1$ other copies $A$ of $F_q$ such that $|A\cap A_0|\geq q$. 

\begin{figure}
\center
\includegraphics[width=\textwidth]{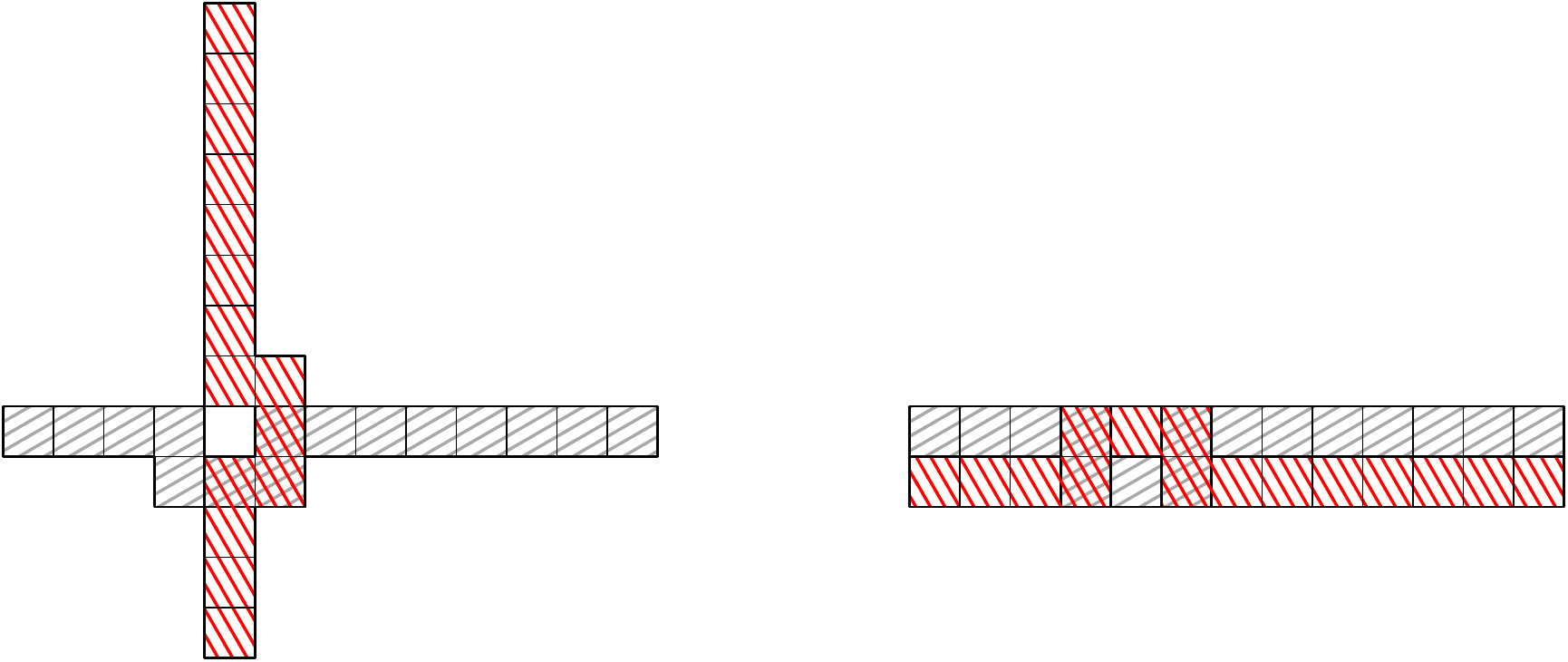}
\caption{Regardless of the value of $q$, 90 or 270-degrees rotation of $F_q$ can share at most three points (left). Likewise, two copies that are flipped across the horizontal axis can only share four points (right).}
\label{fig_common}
\end{figure}

First notice that if any two copies of the polyomino do not align their longest segment horizontally, they only have an intersection of size at most $4$ with $A_0$. Moreover, the only case when this intersection has size $4$ is if they are two copies flipped across the horizontal axis. In the latter case, any further copy can have an intersection of size at most $3$ with at least one of those two copies (see Figure \ref{fig_common}). Since in either case we obtain a contradiction with Lemma \ref{lem_notk} and the fact that $q\geq 4$, we can assume that for any $q+1$-witness set, all copies of $\W$ are aligned horizontally. 

Consider now the 3 lower points $(\lfloor q/2 \rfloor-1,-1),(\lfloor q/2 \rfloor,-1)$ and $(\lfloor q/2 \rfloor+1,-1)$ of $A_0$. Since $A_0$ is the leftmost copy of $P$ and $q\geq 4$ and copies are aligned horizontally, the three points can only be covered by at most two other copies ($A_1$ and $A_2$). Therefore we apply Lemma \ref{lem_heavycover} to show that any $(q+1)$-witness set of $\Z^2$ would be a witness set of $\Z^2 \setminus \{(\lfloor q/2 \rfloor-1,-1),(\lfloor q/2 \rfloor,-1),(\lfloor q/2 \rfloor+1,-1)\}$. Thus, we focus our attention in the rectangle $[0,\lfloor3q/2\rfloor]\times [0,0]$. 

Observe that, since we are considering only this rectangle, the extra copies caused by reflections across the horizontal axis are eliminated because they become the same hyperedge in the restricted hypergraph. Hence, all elements of $\W$ must be of the form $A_i$ or $B_j$ for some $i,j\geq 0$. Also notice that we have $|A_0 \cap A_i|\geq q$ if and only if $i\in \{1,\ldots, \lfloor q/2 \rfloor -1\}$ (provided that $q\geq 4$). Analogously, if $q\geq 2$ we have $| A_0 \cap B_j|\geq q \Leftrightarrow j\in \{0,\ldots, \lfloor q/2 \rfloor -1\}$. In particular, the set $\W$ can have at most $2\lfloor q/2 \rfloor $ elements, hence there cannot exist a $(q+1)$-witness set. 



\end{proof}






\begin{theorem}
For any $k\in\N$ such that $k\neq 1,3$, there exists a polyomino $P$ such that $\H (P)=k$.
\end{theorem}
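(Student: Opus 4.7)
The plan is to assemble the three cases of the theorem directly from results already established in the paper, since each value of $k \neq 1,3$ is realized by a polyomino that has already been exhibited (or can be constructed from those exhibited).

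First I would handle $k=2$: by Theorem~\ref{theo:H2}, any rectangle (e.g.\ the single cell $\{(0,0)\}$, or any $m \times n$ rectangle with $m,n \geq 1$) has Helly number exactly $2$. Next, for $k=4$, I would invoke Corollary~\ref{cor_convexpoly}: any non-rectangular polyomino obtained as the intersection of a convex subset of $\R^2$ with $\Z^2$ has Helly number $4$. A concrete choice is the L-tromino $\{(0,0),(1,0),(0,1)\}$ mentioned in the text as the ``El'' polyomino; it is convex as a lattice set, not a rectangle, and moreover its Helly number is at least $4$ by Lemma~\ref{lem:seqex} (it carries the small empty quadrant structure), while the upper bound $\mathcal H(P) \leq 4$ comes from Doignon's theorem applied via Corollary~\ref{cor_convexpoly}.

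For every $k \geq 5$, I would use the family $F_q$ from Section~\ref{sechihg}. Setting $q = k-1 \geq 4$, Lemma~\ref{lem_gadget} gives $\mathcal H(F_{k-1}) = (k-1)+1 = k$, so $P = F_{k-1}$ is the desired polyomino. Combining the three cases (the rectangle for $k=2$, the L-tromino for $k=4$, and $F_{k-1}$ for $k \geq 5$) yields the existence of a polyomino of Helly number $k$ for every $k \in \N \setminus \{1,3\}$, completing the proof. Together with the fact that polyominoes are finite nonempty (ruling out $k=1$) and Corollary~\ref{cor:H3} (ruling out $k=3$), this also confirms that $\{2\} \cup \{k : k \geq 4\}$ is the exact set of realizable Helly numbers.

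There is no real obstacle here: the theorem is essentially a ``packaging'' statement that collects the constructions of Sections~2 and~\ref{sechihg}. The only thing to be careful about is the boundary case $k=4$, where I must cite a concrete polyomino whose Helly number is proved to be exactly $4$ (not just $\leq 4$); the L-tromino works because the lower bound $\geq 4$ is provided by the small empty quadrant analysis of Lemma~\ref{lem:seqex}, and the upper bound $\leq 4$ by Doignon's theorem for convex lattice sets.
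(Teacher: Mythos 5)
Your proposal is correct and matches the paper's (implicit) argument exactly: the paper states this theorem without a written proof precisely because it is the assembly of Theorem~\ref{theo:H2} for $k=2$, the ``El'' tromino (whose Helly number is pinned to $4$ by Lemma~\ref{lem:seqex} together with Corollary~\ref{cor_upper} or Corollary~\ref{cor_convexpoly}) for $k=4$, and Lemma~\ref{lem_gadget} with $q=k-1$ for $k\geq 5$. Your handling of the $k=4$ boundary case, including the verification that the lower bound comes from the small empty quadrant structure, is exactly the intended reasoning.
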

\section{Experimental Results}\label{sec_imple}

As a complement to our research, we computed the Helly number of all polyominoes of small size with the help of a computer. The algorithm uses the results of Section \ref{sec_upper} and runs in exponential time, testing all possible witness sets. 

Specifically, we construct all polyominoes of size at most $15$ using the method of Redermeier \cite{r-cpyaa-81}. For each generated polyomino $P$, we tested whether or not it satisfies the Helly property $k$. In order to do so, we compute all copies of $P$ that have at least $k-1$ cells in common with a fixed polyomino, and store all such copies in a set $\mathcal{C}$. Once this set has been computed, we consider of its subsets and consider them as candidate witness sets.  Whenever a counterexample witness is found, we  can certify that the Helly number of the given polyomino is higher. Otherwise, we obtain an upper bound of its Helly number. Combining this approach with a binary search on $k$ gives us a method to compute the exact Helly number of all polyominoes of small size. 

Results of the execution can be seen in Table \ref{tab:result}, where $\mathcal{P}_n$ denotes the set containing all polyominoes of size $n$. We note that, as opposed to what one might expect, there is no monotonicity of any column, nor unimodality of any row. In Figures \ref{fig_examp} and \ref{fig13} we show some interesting polyominoes. More details of the implementation and results of the execution can be found in \cite{takumijap}.

\begin{figure}
\center
\includegraphics[width=\textwidth]{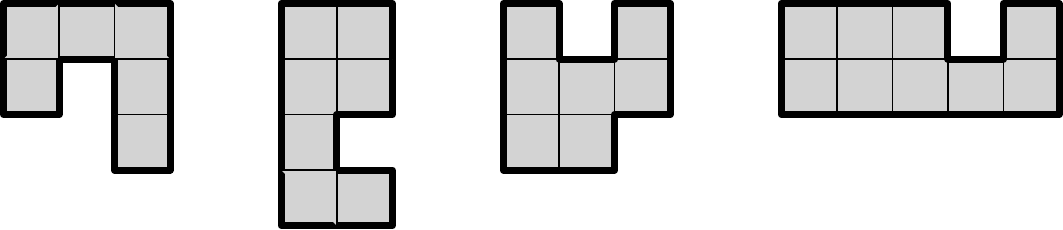}
\caption{Several polyominoes $P$ of Helly number $|P|-1$.}
\label{fig_examp}
\end{figure}

\begin{figure}
\center
\includegraphics[width=\textwidth]{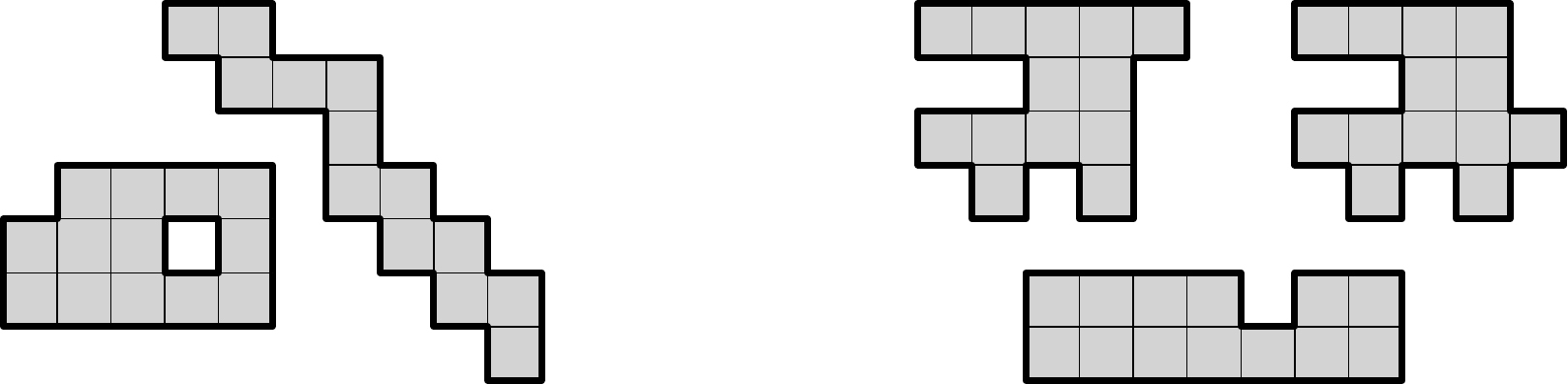}
\caption{Polyominoes of size $13$ of Helly number $9$ and $10$ (left and right, respectively).}
\label{fig13}
\end{figure}

\begin{landscape}

\begin{table}[t]
	\centering
	\label{tab:result}
	\begin{tabular}{|r|r|rrrrrrrr|r|}
		\hline
		&  & \multicolumn{8}{|c|}{Helly number} & computation\\
		\multicolumn{1}{|c|}{$n$} &
		\multicolumn{1}{|c|}{$|\mathcal{P}_n|$} &
		\makebox[4em][r]{2} & 
		\makebox[4em][r]{4} & 
		\makebox[4em][r]{5} & 
		\makebox[4em][r]{6} & 
		\makebox[4em][r]{7} & 
		\makebox[4em][r]{8} & 
		\makebox[4em][r]{9} & 
		\makebox[4em][r]{10} & 
		\multicolumn{1}{|c|}{time (sec.)} \\
		\hline
		1 & 1 & 1 & 0 & 0 & 0 & 0 & 0 & 0 & 0 & 0\\
		2 & 1 & 1 & 0 & 0 & 0 & 0 & 0 & 0 & 0 & 0\\
		3 & 2 & 1 & 1 & 0 & 0 & 0 & 0 & 0 & 0 & 0\\
		4 & 5 & 2 & 3 & 0 & 0 & 0 & 0 & 0 & 0 & 0\\
		5 & 12 & 1 & 11 & 0 & 0 & 0 & 0 & 0 & 0 & 0\\
		6 & 35 & 2 & 32 & 1 & 0 & 0 & 0 & 0 & 0 & 0.02 \\
		7 & 108 & 1 & 90 & 15 & 2 & 0 & 0 & 0 & 0 & 0.29 \\
		8 & 369 & 2 & 226 & 107 & 34 & 0 & 0 & 0 & 0 & 2.97 \\
		9 & 1285 & 2 & 526 & 479 & 277 & 0 & 1 & 0 & 0 & 20.62 \\
		10 & 4655 & 2 & 1173 & 1742 & 1684 & 8 & 46 & 0 & 0 & 128.95 \\
		11 & 17073 & 1 & 2378 & 5438 & 8639 & 124 & 493 & 0 & 0 & 674.67 \\
		12 & 63600 & 3 & 4627 & 15246 & 38883 & 1287 & 3554 & 0 & 0 & 3384.50 \\
		13 & 238591 & 1 & 8465 & 39286 & 159516 & 10110 & 21208 & 2 & 3 & 16059.70 \\
		14 & 901971 & 2 & 14941 & 94638 & 612251 & 65680 & 114338 & 74 & 47 & 77328.24 \\
		\hline
	\end{tabular}
		\caption{Distribution of the polyominoes of Helly number $k$ as a function of its size $n$ (for $n\leq 15$). Recall that $\mathcal{P}_n$ denotes the set containing all polyominoes of size $n$}	
\end{table}
\end{landscape}

\section{Conclusion}  
In this paper we have completely characterized for which values of $k$ there exist polyominoes of Helly number $k$. It is easy to see that the algorithm used in Section \ref{sec_imple} is exponential in the size of the polyomino. Hence, an interesting open problem is finding an efficient method that computes the Helly number of a given polyomino. Although it is known that designing a general algorithm that works for hypergraphs is difficult \cite{hypersurvey}, we wonder whether or not one can devise an algorithm that runs in polynomial time for polyominoes. 

The concept of {\em strong} $k$-Helly was introduced in \cite{gj-eigpit-85}. In this generalization, polyomino $P$ satisfies the {\em strong} $k$-Helly property, if for any finite family $\mathcal{A}$ of copies of $P$, there exist $A_1,\ldots, A_k\in\mathcal{A}$ such that $\cap_{A\in\mathcal{A}}A=A_1\cap \ldots \cap A_k$. Also, we note that we defined a copy of $P$ as any image of $P$ with respect to translations and the 8 symmetries of the square. It is easy to see that our results do not hold if we only consider the strong Helly property (instead of the classic definition), or if we only allow translations (or translations and rotations). For example, rectangles have Helly number $2$, but it is easy to see that they have strong Helly number $4$. It would be interesting to see how much can the Helly number of a given polyomino change when considering the alternative definitions. In particular, does there exist a polyomino $P$ whose Helly number dramatically increases if we only forbid the rotation and/or symmetry operations? or if we consider the strong Helly definition instead?

\section*{Acknowledgements}
The authors would like to thank Takumi Morishita for implementing and executing the algorithm of Section \ref{sec_imple}.

\small 

\end{document}